\def\V{{\mathcal V}}
\def\H{{\mathcal H}}
\def\C{{\mathbb C}}
\def\R{{\mathbb R}}
\newcommand{\diag}{\mathsf{diag} }
\newcommand{\Ran}{\mathsf{Ran}~ }
\newcommand{\scal}[1]{\langle#1\rangle}
\newtheorem*{theorem*}{Theorem}
\numberwithin{equation}{section}
\newtheorem{theorem}{Theorem}[section]
\newtheorem{proposition}[theorem]{Proposition}
\newtheorem{lemma}[theorem]{Lemma}
\newtheorem{corollary}[theorem]{Corollary}
\newtheorem{definition}  [theorem] {Definition}
\begin{document}

\title[Perturbation theory for the spectral...]{Perturbation theory for the spectral decomposition of Hermitian matrices}
\author{Marcus Carlsson}
\address{Centre for Mathematical Sciences, Lund University\\Box 118, SE-22100, Lund,  Sweden\\}
\email{marcus.carlsson@math.lu.se}

\begin{abstract}
Let $A$ and $E$ be Hermitian self-adjoint matrices, where $A$ is fixed and $E$ a small perturbation. We study how the eigenvalues and eigenvectors of $A+E$ depend on $E$, with the aim of obtaining first order formulas (and when possible also second order) that are explicitly computable in terms of the spectral decomposition of $A$ and the entries in $E$. In particular we provide explicit Fr\'{e}chet type differentiability results. The findings can  be seen as an extension of the Rayleigh-Schr\"{o}dinger coefficients for analytic expansions of one-dimensional perturbations.
\end{abstract}
\maketitle


\section{Introduction}

Given a Hermitian self-adjoint matrix $A$, the spectral decomposition gives a unitary matrix $U_A$ and a diagonal matrix $\Lambda_\alpha$ such that
\begin{equation}\label{spec}
A=U_A\Lambda_\alpha U_A^*.
\end{equation}
The columns of $U_A$ are the eigenvectors whereas the elements on the diagonal of $\Lambda_\alpha$ are the corresponding eigenvalues of $A$, and we denote the vector of these by $\alpha$. This is one of the most fundamental results of matrix theory and its generalization to Hilbert spaces is a key tool in mathematical analysis (see e.g.~Ch.~XI,XIII and XIV of \cite{dunford1963linear}), numerical analysis \cite{higham2008functions} as well as mathematical physics and quantum mechanics \cite{hilbert1955methods,kemble1937fundamental,reed1980functional}. It is therefore rather surprising that the perturbation theory of this result has not yet been fully understood, and it is the aim of this article to improve the situation.

More precisely, given a ``small'' self-adjoint matrix $E$ we ask how $U_A$ and $\alpha$ change upon replacing $A$ with $A+E$. The literature on this topic is immense, specially concerning perturbation of the eigenvalues, and can roughly be divided into two groups. One group ``freeze'' the variable $E$ and consider $A+tE$ as a function of the complex variable $t$, giving rise to a beautiful and rich connection with algebra and complex function theory. However, it lacks a global perspective, in the sense that $E$ is fixed and not a free variable. The second group of results do not ``freeze'' $E$, with weaker more general results as a consequence, and is closer to the vein of this study.

A complicating factor is that $U_A$ is not unique. As long as the eigenvalues of $A$ are simple, this is a minor issue (the eigenvectors are then unique up to multiplication by unimodular numbers) but the higher multiplicity case is more intricate. For simplicity, let us first assume that all eigenvalues are simple and present our findings in this case. 

\subsection{Simple eigenvalues and Fr\'{e}chet differentiability}\label{secsimple}

First of all it is crucial to work in the basis given by the columns of $U_A$, so that $A$ simply reduces to the diagonal matrix $\Lambda_\alpha$. Concretely this can be done by noting that \begin{equation*}\label{y5}U_A^*(A+E)U_A=\Lambda_\alpha+U_A^*E U_A=\Lambda_\alpha+\hat E\end{equation*} where $\hat E$ is defined as $U_A^* E U_A$. Due to the assumption that $A$ has simple eigenvalues, $\hat E$ is independent of the choice of $U_A$. Suppose we are interested in understanding the leading order perturbation of the $j:$th eigenvalue $\alpha_j$. As a representative of known ``global'' perturbation theory results, let us mention that the theory of Ger\v{s}gorin discs states that, given an index $j$, there is an eigenvalue $\xi_j$ of $\Lambda_\alpha+\hat E$ which lies in a disc with center $\alpha_j+\hat E_{(j,j)}$ and radius given by the sum of the off-diagonal elements of the $j$:th row. Since $\Lambda_\alpha$ is diagonal this gives us the estimate \begin{equation}\label{gersgorin}|\xi_j-(\alpha_j+\hat E_{(j,j)})|\leq \sum_{i\neq j}|\hat E_{(j,i)}|.\footnote{The reason for the ugly parenthesis around the subindex is that we shall introduce another meaning for e.g.~$E_{11}$ in Section \ref{gre}.}\end{equation}
For the present purposes, this is not satisfactory since the ``perturbation'' $\hat E_{(j,j)}$ and the error estimate are both of order $O(\|E\|)$. We will in this paper show that this estimate can be improved to yield \begin{equation}\label{gersgorinalacarlsson}\xi_j=\alpha_j+\hat E_{(j,j)}+O(\|E\|^2)\end{equation} where $\xi_j$ is the $j:$th eigenvalue of $A+E$ ordered non-increasingly,
showing that the leading order perturbation of the $j:$th eigenvalue is indeed given by $\hat E_{(j,j)}$. As a consequence, the vector of eigenvalues $\xi$ (ordered non-increasingly) is Fr\'{e}chet differentiable at 0 (as a function of $E$). More precisely, letting $\H_n$ denote the set of $n\times n$ complex Hermitian matrices and $\xi'$ the Fr\'{e}chet derivative of $\xi$, we have;
\begin{theorem}\label{t1}
The map $\xi:\H_n\rightarrow\R^n$ is Fr\'{e}chet differentiable at 0 with Fr\'{e}chet derivative
\begin{equation}\label{vdf}\xi'(E)=(\hat E_{(11)},\ldots, \hat E_{(n,n)})=(u_1^*Eu_1,\ldots, u_n^*Eu_n),\end{equation}
where $u_1,\ldots,u_n$ are the eigenvectors of $A$ (the columns of $U_A$).
\end{theorem}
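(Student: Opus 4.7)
The plan is: reduce to the diagonal case by unitary invariance, then exploit the eigenvalue equation in two passes---first to bound the perturbation of the eigenvector by $O(\|E\|)$, and then to bootstrap this into an $O(\|E\|^2)$ bound on the eigenvalue perturbation, which is an order better than $\|E\|$ and hence exactly what Fr\'echet differentiability demands.

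Concretely, since $U_A$ is unitary the map $E\mapsto\hat E = U_A^*EU_A$ is an isometry on $\H_n$, and $A+E$ has the same spectrum as $\Lambda_\alpha+\hat E$. I may therefore assume $A=\Lambda_\alpha$ with $\alpha_1>\cdots>\alpha_n$ and write $E$ for $\hat E$. Set $g=\min_{i\neq j}|\alpha_i-\alpha_j|>0$. For $\|E\|<g/4$, Weyl's inequalities force $|\xi_j-\alpha_j|<g/4$, so the ordered eigenvalues of $A+E$ remain simple and $|\xi_j-\alpha_i|\geq g/2$ for every $i\neq j$. Fix $j$ and pick a unit eigenvector $v_j$ of $A+E$ for $\xi_j$; by continuity of the spectral projector the phase may be chosen so that $s_j:=\langle v_j,e_j\rangle\geq 1/2$, and I decompose $v_j=s_je_j+w_j$ with $w_j\perp e_j$.

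The eigenvalue equation $(A+E)v_j=\xi_jv_j$ now splits coordinate-wise. For $i\neq j$, the $i$-th component reads $(\xi_j-\alpha_i)(w_j)_i=(Ev_j)_i$, so using $|\xi_j-\alpha_i|\geq g/2$ and $\|v_j\|=1$,
\begin{equation*}
\|w_j\|^2=\sum_{i\neq j}|(w_j)_i|^2\leq \frac{4\|Ev_j\|^2}{g^2}\leq \frac{4\|E\|^2}{g^2}.
\end{equation*}
The $j$-th component reads $s_j(\xi_j-\alpha_j-E_{(j,j)})=\sum_{i\neq j}E_{(j,i)}(w_j)_i$, and Cauchy--Schwarz combined with $s_j\geq 1/2$ upgrades this to
\begin{equation*}
|\xi_j-\alpha_j-E_{(j,j)}|\leq 2\|E\|\,\|w_j\|\leq \frac{4}{g}\|E\|^2.
\end{equation*}
Translated back to the original basis this says $\xi_j(E)=\alpha_j+\hat E_{(j,j)}+O(\|E\|^2)$ uniformly in a neighbourhood of $0$. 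Since $E\mapsto(\hat E_{(1,1)},\ldots,\hat E_{(n,n)})=(u_1^*Eu_1,\ldots,u_n^*Eu_n)$ is a bounded linear map $\H_n\to\R^n$ and the remainder is $o(\|E\|)$, this is precisely Fr\'echet differentiability at $0$ with the claimed derivative.

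The main technical nuisance is keeping the estimates uniform in the direction of $E$: the spectral gap must survive the perturbation so that the denominators $\xi_j-\alpha_i$ stay uniformly bounded away from zero, and $v_j$ must be chosen with $s_j$ uniformly bounded below. Both points follow from Weyl's inequality together with the simplicity hypothesis on $\alpha$; once they are in place, the bootstrap from $\|w_j\|=O(\|E\|)$ to $\xi_j-\alpha_j-E_{(j,j)}=O(\|E\|^2)$ is essentially automatic.
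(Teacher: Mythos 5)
Your proof is correct, but it takes a genuinely different route from the paper's. The paper (Section~\ref{secO2}) does not touch eigenvectors at all: it conjugates $\Lambda_\alpha+\hat E$ by $I+M\circ\hat E$, where $M_{(i,j)}=1/(\alpha_i-\alpha_j)$, uses the commutator identity \eqref{id} to cancel the off-diagonal part of $\hat E$ to first order, and then invokes Ger\v{s}gorin's disc theorem on the resulting matrix $\Lambda_\alpha+\hat E^d+O(\|E\|^2)$. You instead fix a normalized eigenvector $v_j$ and read off the eigenvalue equation componentwise: the off-axis equations give $\|w_j\|=O(\|E\|/g)$, and substituting that into the $j$-th component bootstraps to $\xi_j-\alpha_j-\hat E_{(j,j)}=O(\|E\|^2/g)$. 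Your approach is a clean resolvent-free version of the standard first-order eigenvector-perturbation computation; it buys transparency and explicit constants, at the cost of requiring the spectral gap $g>0$ explicitly (so it is tied to simple eigenvalues as stated, whereas the paper's argument is phrased so as to also cover the higher-multiplicity case with block-wise diagonal $\hat E$). One small stylistic remark: the appeal to ``continuity of the spectral projector'' to get $s_j\geq 1/2$ is unnecessary and slightly out of order, since your own bound $\|w_j\|\leq 2\|E\|/g$ is derived without it, and then $s_j^2=1-\|w_j\|^2\geq 1/2$ for $\|E\|$ small follows immediately once the phase is chosen to make $s_j\geq 0$; this makes the argument entirely self-contained.
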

The formula \eqref{vdf} is a special case of Theorem 1.1 in \cite{lewis1996derivatives} and appears in a more general form in Corollary 10 of \cite{lewis1999nonsmooth}, we refer to that paper for further remarks on its history. In either case, it is not clear if \eqref{gersgorinalacarlsson} is implied by these results, and even \eqref{vdf} is hard to find in the standard literature. If we replace $E$ by $tE$ and freeze $E$, then it is well known that $\hat E_{(j,j)}$ is the first order term in the series expansion, which is shown in many books on mathematical physics or quantum mechanics, see e.g.~Sec.~XII.1 in \cite{reed1980functional}, but this also does not prove the stronger result \eqref{gersgorinalacarlsson}. In either case, the proofs provided here are completely different than previous approaches and the main goal of this article is to take the estimate \eqref{gersgorinalacarlsson} one step further and get $O(\|E\|^3)$ control.

Before discussing this issue, let us now focus on Fr\'{e}chet differentiability of the eigenvectors. We would like to consider the orthonormal matrix in the spectral decomposition as a function of $E$, but since this matrix is not unique this is not well defined. In Section \ref{secspec} we present a deterministic way of picking a concrete function $U=U(E)$ whose columns are the unit norm eigenvectors of $A+E$, such that this function is Fr\'{e}chet differentiable with an explicit formula. To state this, we introduce the matrix $M=M(A)$ by setting $M_{(j,j)}=0$ for $j=1,\ldots,n$ and \begin{equation}\label{defM1}M_{(i,j)}=\frac{1}{\alpha_i-\alpha_j},\quad i\neq j,\end{equation} and let $\circ$ denote Hadamard multiplication of matrices. Also, let $O_n$ denote the orthogonal group, i.e.~the set of all unitary matrices. Our result then reads as follows.
\begin{theorem}\label{t2}
Given a fixed matrix $U_A$ in the spectral decomposition of $A$, there exists a map $U:\H_n\rightarrow O_n$ whose columns are eigenvectors to $A+E$ such that $U(0)=U_A$, which is Fr\'{e}chet differentiable at 0 with Fr\'{e}chet derivative
\begin{equation}\label{Frechet eig}U'(E)= -U_A(M\circ \hat E).\end{equation}
\end{theorem}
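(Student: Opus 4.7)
The plan is to reduce to the case $A=\Lambda_\alpha$ diagonal, construct a local map $\tilde U:\H_n\to O_n$ with $\tilde U(0)=I$ whose columns are eigenvectors of $\Lambda_\alpha+F$, compute its Fréchet derivative at $0$, and then translate back. Since $A+E=U_A(\Lambda_\alpha+\hat E)U_A^*$, setting $U(E):=U_A\tilde U(\hat E)$ in a neighborhood of $0$ (and extending arbitrarily to a unitary eigenvector selection elsewhere) provides a map whose columns are eigenvectors of $A+E$ and with $U(0)=U_A$. Since $E\mapsto\hat E$ is linear and bounded, the Fréchet chain rule gives $U'(0)(E)=U_A\tilde U'(0)(\hat E)$, so it suffices to prove $\tilde U'(0)(F)=-M\circ F$.

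For the construction I would use spectral projectors. For each $j$ pick a circle $\gamma_j\subset\C$ centered at $\alpha_j$ of radius strictly less than $\tfrac12\min_{i\neq j}|\alpha_i-\alpha_j|$. For all $F\in\H_n$ with $\|F\|$ sufficiently small, the resolvent $R(z,F):=(zI-\Lambda_\alpha-F)^{-1}$ is well-defined on $\gamma_j$ via the Neumann series, and
\[
P_j(F):=\frac{1}{2\pi\ii}\oint_{\gamma_j}R(z,F)\,dz
\]
is the rank-one orthogonal projector of $\Lambda_\alpha+F$ onto the eigenspace for the eigenvalue $\xi_j$, which by Theorem \ref{t1} lies inside $\gamma_j$. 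Define $v_j(F):=P_j(F)e_j$ and $u_j(F):=v_j(F)/\|v_j(F)\|$, resolving the phase by requiring $\langle u_j(F),e_j\rangle>0$; this is smooth near $0$ because $\langle v_j(0),e_j\rangle=1$. The matrix $\tilde U(F)$ with columns $u_j(F)$ then lies in $O_n$, since the $u_j(F)$ are normalized eigenvectors of the Hermitian matrix $\Lambda_\alpha+F$ for distinct eigenvalues.

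Expanding the Neumann series uniformly in $z\in\gamma_j$,
\[
R(z,F)=R_0(z)+R_0(z)FR_0(z)+O(\|F\|^2),\qquad R_0(z):=(zI-\Lambda_\alpha)^{-1},
\]
applying to $e_j$ and integrating, the zeroth-order term yields $e_j$ while the first-order term equals
\[
\frac{1}{2\pi\ii}\oint_{\gamma_j}\sum_i\frac{F_{(i,j)}\,e_i}{(z-\alpha_i)(z-\alpha_j)}\,dz=\sum_{i\neq j}\frac{F_{(i,j)}}{\alpha_j-\alpha_i}\,e_i=-\sum_{i\neq j}M_{(i,j)}F_{(i,j)}\,e_i,
\]
since the $i=j$ summand is a pure double pole with vanishing residue. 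Hence $v_j(F)=e_j-\sum_{i\neq j}M_{(i,j)}F_{(i,j)}e_i+O(\|F\|^2)$, and because the first-order correction is orthogonal to $e_j$, normalization only perturbs at order $O(\|F\|^2)$, giving $\tilde U(F)=I-M\circ F+O(\|F\|^2)$. Reading off $\tilde U'(0)(F)=-M\circ F$ and multiplying by $U_A$ yields \eqref{Frechet eig}. The main technical point — getting Fréchet rather than merely Gâteaux differentiability — is precisely that the Neumann remainder is $O(\|F\|^2)$ in operator norm uniformly on the compact contours $\gamma_j$, so the error bounds above hold in Euclidean norm uniformly in the direction of $F$.
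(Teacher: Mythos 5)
Your proof is correct, and it takes a genuinely different route from the paper's. You use the classical resolvent/Riesz-projection method: form the rank-one spectral projectors $P_j(F)=\frac{1}{2\pi\ii}\oint_{\gamma_j}(zI-\Lambda_\alpha-F)^{-1}\,dz$, apply them to $e_j$, expand the Neumann series to first order, and evaluate a residue to obtain $-M\circ F$; the uniformity of the Neumann remainder on the compact contours then upgrades Gâteaux to Fréchet differentiability, as you correctly emphasize. The paper instead treats Theorem \ref{t2} as the simple-eigenvalue specialization of Proposition \ref{p1}, whose proof is purely algebraic: it conjugates by a Schur-complement-type triangular matrix $\tilde U_{ap}$ to decouple the blocks, exhibits an explicit kernel vector by a Cramer's-rule (cofactor) construction $u_p=\det X_p$, and bounds each cofactor by analyzing the permutation expansion of the determinant, leaning on the $O(\|B\|\|C\|^2)$ eigenvalue estimate from Theorem \ref{teigenvalues1}. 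For the simple-eigenvalue case your argument is shorter and more transparent. The trade-off is that the paper's construction is engineered to survive the degenerate case $l>1$: there the contour around a repeated eigenvalue $\rho$ yields a rank-$l$ projector, and splitting it into one-dimensional projections requires diagonalizing the Schur complement $B$, which is precisely the extra hypothesis $E\in\V_c$ in Proposition \ref{p1}; your rank-one-projector argument does not extend to that setting without a comparable additional mechanism.
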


This result is shown in a more general context in Section \ref{secspec}. More precisely, we remove the assumption that the eigenvalues of $A$ are simple, but then we have to restrict formula \eqref{Frechet eig} to a subset of $E$'s.

\subsection{$O(\|E\|^3)$ control on the eigenvalues}\label{secO3}

Neither Theorem \ref{t1} nor \ref{t2} are very hard to prove, the main purpose of the paper is to provide a more precise control on the eigenvalues of the perturbation. This may seem like a technicality but nevertheless an important one. For example, it is crucial when we want to extend Theorem \ref{t2} to the case when $A$ does not have simple eigenvalues. Moreover, in a sequel article \cite{carlsson2018perturbation2} we shall prove a simple new formula for $\sqrt{A+E}$ which rather surprisingly seems to be new (see Sec.~\ref{secapp}), and again the key ingredient in the proof is the results which we now present. Unfortunately, it is not true that \eqref{gersgorinalacarlsson} holds with $O(\|E\|^2)$ replaced by $O(\|E\|^3)$, but rather one needs to introduce certain Schur complements to gain this extra precision.

As before $\alpha$ denotes the eigenvalues of $A$ and $\xi$ those of $A+E$, both ordered non-increasingly. Pick a particular eigenvalue $\alpha_j=\rho$ of $A$ with multiplicity $l\geq 1$. In case $l>1$ then let $\alpha_{i+1}$ be the first occurrence of $\rho$ and $\alpha_{i+l}$ the last. Denote the orthogonal projection onto the eigenspace defined by $\rho$ by $P_\rho$, and set $P_{\rho}^\perp=I-P_\rho$. Set $C=P_\rho E P_{\rho}^\perp$, $D=P_\rho^{\perp} E P_{\rho}^\perp$ and $$B=P_\rho E P_{\rho}-C\Big(P_\rho^{\perp} (A-\rho I+E) P_{\rho}^\perp\Big)^\dagger C^*$$ where $\dagger$ indicates the Moore-Penrose inverse. $B$ is a sort of Schur complement of the submatrix $P_\rho^{\perp} (A-\rho I+E) P_{\rho}^\perp$ in the matrix $A-\rho I+E$. Note that $B$ has rank less than or equal to $l$ and let $\beta=(\beta_1,\ldots,\beta_l)$ denote the $l$ first eigenvalues of $B$ ordered non-increasingly.  We then have

\begin{theorem}\label{teigenvalues}
With $\alpha,~\beta,~\xi$ as above, we have
\begin{equation}\label{est2}\xi_j=\alpha_j+\beta_{j-i}+O(\|B\|\|C\|^2),\quad i<j\leq i+l.\end{equation}
\end{theorem}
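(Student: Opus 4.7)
The plan is a Schur complement reduction to an $l \times l$ Hermitian family, followed by a fixed point argument controlled by Weyl's inequality.

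Setting $\epsilon = \xi - \rho$ and working in the orthogonal decomposition $\C^n = \Ran P_\rho \oplus \Ran P_\rho^\perp$, the four blocks of $A + E - \xi I$ are $P_\rho E P_\rho - \epsilon P_\rho$, $C$, $C^*$, and $T_0 - \epsilon P_\rho^\perp$, where $T_0 := P_\rho^\perp(A - \rho I + E)P_\rho^\perp$; here we repeatedly use $P_\rho(A - \rho I) = 0$. For $\|E\|$ smaller than the spectral gap of $A$ at $\rho$ and $|\epsilon|$ small, the bottom-right block is invertible on $\Ran P_\rho^\perp$, so the Schur complement identity implies that $\xi$ is an eigenvalue of $A + E$ if and only if the $l \times l$ matrix
\[
S(\epsilon) \;=\; P_\rho E P_\rho - \epsilon P_\rho - C(T_0 - \epsilon P_\rho^\perp)^\dagger C^*
\]
is singular on $\Ran P_\rho$.

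Next, applying the resolvent identity $(T_0 - \epsilon P_\rho^\perp)^\dagger - T_0^\dagger = \epsilon\, T_0^\dagger (T_0 - \epsilon P_\rho^\perp)^\dagger$ on $\Ran P_\rho^\perp$ and the definition of $B$, I would rewrite
\[
S(\epsilon) \;=\; B - \epsilon\bigl(P_\rho + K(\epsilon)\bigr), \qquad K(\epsilon) := C\, T_0^\dagger (T_0 - \epsilon P_\rho^\perp)^\dagger C^*,
\]
where $K(\epsilon)$ is Hermitian on $\Ran P_\rho$ with $\|K(\epsilon)\| = O(\|C\|^2)$ uniformly in small $\epsilon$ and $E$. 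Thus $S(\epsilon)v = 0$ becomes the generalized Hermitian eigenproblem $Bv = \epsilon G(\epsilon) v$ with $G(\epsilon) := P_\rho + K(\epsilon)$ positive definite on $\Ran P_\rho$. Passing to the Hermitian matrix $\widetilde B(\epsilon) := G(\epsilon)^{-1/2} B\, G(\epsilon)^{-1/2}$, the expansion $G(\epsilon)^{-1/2} = P_\rho + O(\|C\|^2)$ yields $\|\widetilde B(\epsilon) - B\| = O(\|B\|\,\|C\|^2)$, and Weyl's inequality gives
\[
|\lambda_k(\widetilde B(\epsilon)) - \beta_k| \;=\; O(\|B\|\,\|C\|^2), \qquad k = 1, \ldots, l,
\]
uniformly for $\epsilon$ in a small neighborhood of $0$.

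The final step is a fixed point: the solutions of $S(\epsilon) = 0$ are exactly the values with $\epsilon = \lambda_k(\widetilde B(\epsilon))$ for some $k$. Applying the intermediate value theorem to the continuous function $\epsilon \mapsto \lambda_k(\widetilde B(\epsilon)) - \epsilon$ on a small interval centred at $\beta_k$ produces, for each $k = 1, \ldots, l$, a root $\epsilon_k^* = \beta_k + O(\|B\|\,\|C\|^2)$. A Weyl count applied to $A$ versus $A + E$ guarantees that, for $\|E\|$ below the gap, exactly $l$ eigenvalues of $A + E$ lie in the relevant neighbourhood of $\rho$ and occupy positions $i+1, \ldots, i+l$ in the non-increasing ordering; combining this with the Schur-complement equivalence and matching the standard orderings of $\{\lambda_k(\widetilde B(\cdot))\}$ with $\{\xi_{i+k}\}$ gives $\xi_{i+k} = \alpha_{i+k} + \beta_k + O(\|B\|\,\|C\|^2)$, which is \eqref{est2}. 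I expect the principal obstacle to be this last matching-and-counting step, where one must reconcile the natural Hermitian ordering of eigenvalues of $\widetilde B(\epsilon)$ (a curve in $\epsilon$) with the global non-increasing ordering of the $\xi$'s, and simultaneously verify that no stray eigenvalues of $A+E$ intrude on the neighbourhood of $\rho$.
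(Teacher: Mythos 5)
Your proposal is correct, and it is a genuinely different route from the paper's proof. The paper expands the characteristic determinant of an auxiliary (non-self-adjoint) matrix obtained by a fixed similarity transformation, and then runs a finite induction over all $n!$ permutation terms, controlling the movement of the small roots at each step via H.~Cartan's minimum-modulus theorem for polynomials (Lemma~\ref{l2}). You instead exploit the $\epsilon$-dependent Schur complement
$S(\epsilon)=P_\rho E P_\rho-\epsilon P_\rho-C(T_0-\epsilon P_\rho^\perp)^\dagger C^*$,
rewrite it via the resolvent identity as the pencil $B-\epsilon G(\epsilon)$ with $G(\epsilon)=P_\rho+K(\epsilon)$, symmetrize to the Hermitian $l\times l$ family $\widetilde B(\epsilon)=G(\epsilon)^{-1/2}BG(\epsilon)^{-1/2}$, and close the argument with Weyl's inequality plus an intermediate-value/fixed-point step. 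This is cleaner and stays entirely within elementary Hermitian linear algebra; the paper's route via Cartan is heavier but works at the level of the characteristic polynomial and does not require the symmetrization step.

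Two points are worth making fully explicit where you flag the ``matching-and-counting'' concern. First, the Lipschitz constant of $\epsilon\mapsto\lambda_k(\widetilde B(\epsilon))$ is $O(\|B\|\,\|C\|^2)<1$ (because $\partial_\epsilon K(\epsilon)=CT_0^\dagger(T_0-\epsilon P_\rho^\perp)^{-2}C^*=O(\|C\|^2)$), so each $\epsilon\mapsto\lambda_k(\widetilde B(\epsilon))-\epsilon$ is strictly decreasing; this gives \emph{uniqueness} of each root $\epsilon_k^*$, forces $\epsilon_1^*\geq\cdots\geq\epsilon_l^*$ (since $\lambda_1\geq\cdots\geq\lambda_l$ pointwise), and shows every solution of $\det S(\epsilon)=0$ in the window coincides with some $\epsilon_k^*$ — so no spurious roots. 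Second, multiplicity is handled automatically by the Hermitian structure: the multiplicity of $\rho+\epsilon_0$ as an eigenvalue of $A+E$ equals $\dim\ker S(\epsilon_0)=\dim\ker(\widetilde B(\epsilon_0)-\epsilon_0 I_l)$, which equals the number of $k$ with $\epsilon_k^*=\epsilon_0$; combined with the Weyl count that exactly $l$ eigenvalues of $A+E$ lie in the window (and that they occupy positions $i+1,\dots,i+l$), this gives $\xi_{i+k}=\rho+\epsilon_k^*=\alpha_{i+k}+\beta_k+O(\|B\|\,\|C\|^2)$. With these two remarks spelled out, your proof is complete and self-contained.
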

Clearly $O(\|B\|\|C\|^2)\leq O(\|E\|^3)$, so the estimate \eqref{est2} is a bit sharper than promised. It is for example interesting to observe that the matrix $D$ is absent in the error estimate, which of course does not imply that it does not affect $\xi_j$, it simply states that its effect on $\xi_j$ is very small.

As a consequence of Theorem \ref{teigenvalues}, it is easy to see that the estimate \eqref{gersgorinalacarlsson} still holds, given that we have chosen the eigenvectors (the columns of $U_A$) so that whenever $\alpha_i=\alpha_j$, we have $\hat E_{(i,j)} =0$ and \begin{equation}\label{gr}\hat E_{(i,i)}\geq \hat E_{(j,j)}\end{equation} for $j>i$ (we prove this in Sec.~\ref{secO2}). To pave the way for the next section, we give such matrices a special name. \begin{definition}\label{defblock} Matrices $\hat E$ that satisfy the above requirements will be called ``{block-wise diagonal}'', letting the dependence on the ``blocks'' of $A$ (i.e.~sets of indices such that $\alpha_i=\alpha_j$) be implicit. If the inequality in \eqref{gr} is strict for all possible pairs $(i,j)$, then we say that $\hat E$ has ``{block-wise decreasing diagonal elements}''.\end{definition} Although $\hat E$ in general depends on the particular choice of $U_A$, we remark that $\hat E$ is unique when $\hat E$ is block-wise diagonal with block-wise decreasing diagonal elements, since then the columns of $U_A$ are unique up to multiplication with unimodular numbers.

\subsection{Gateaux type differentiability of the eigenvalues/vectors}
Concerning perturbation of the eigenvectors in this more general framework, it is not possible to provide a nice global result as Theorem \ref{t2}, and the situation is more intricate. Indeed, it is well known that it is impossible to define $U(E)$ in a way such that the eigenvectors are continuous in a neighborhood of $0$. Despite this, when considering perturbations along a line, i.e.~$A+tF$ where $t\in\R$ and $F$ is fixed, it is well known that the eigenvectors can be defined as analytic functions, a most surprising result due to F. Rellich. We will give a concrete construction of such a function $U(t)$, i.e.~so that its columns are normalized eigenvectors of $A+tF$, under the assumption that $\hat F$ is block-wise diagonal with block-wise decreasing diagonal elements. More interestingly, we give an explicit formula for its derivative at 0. However, it comes as a surprise that the formula \eqref{Frechet eig} does not apply without further modification. An explicit example of this is given in Section \ref{secspec}. The details how \eqref{Frechet eig} needs to be modified are rather involved so we omit them from this introduction and refer to Section \ref{secline} and Theorem \ref{t5}.

Concerning the eigenvalues $\xi=\xi(t)$ to $A+tF$ the situation is better. We recall the celebrated result by F. Rellich which states that these become real analytic functions of $t$ (Chapter 3.5, \cite{baumgartel1985analytic}), (at the price of removing the convention that $\xi(t)$ is non-increasing for every fixed $t$). In either case, this result is remarkable in the light of the erratic behavior of the eigenvectors near $A$. Explicit expressions for the coefficients in the corresponding series expansion is known in the mathematical physics community as the Rayleigh-Schr\"{o}dinger-coefficients, and in the case when $A$ has only simple eigenvalues their form is known up until arbitrary order (see e.g.~Sec. XII.1 of \cite{reed1980functional}). However, it is hard to find information about the case when $A$ has eigenvalues of higher multiplicity, although it appears e.g.~in the classic \cite{hilbert1955methods} by Courant and Hilbert. In fact it has been rediscovered for example in \cite{lancaster1964eigenvalues}. As an easy consequence of the framework developed in this paper we retrieve these formulas
\begin{equation}\label{tao}\xi_j(t)=\alpha_j+t\hat F_{(j,j)}+t^2\sum_{k:\alpha_k\neq \alpha_j}\frac{|\hat F_{(k,j)}|^2}{\alpha_j-\alpha_k}+O(t^3).\end{equation}
This is shown in Theorem \ref{t4}. For some reason this extension has not been picked up by mainstream books relating to this topic, so we here wish to highlight the result and provide an alternative proof. A corresponding expression for the eigenvalues is found in Theorem \ref{t5} of Section \ref{secline}, but again this expression is found already in \cite{hilbert1955methods}.

\subsection{Application; expansions of $\sqrt{A+E}$ and $|A+E|$}\label{secapp}

Another consequence of Theorem \ref{teigenvalues}, which we will develop in a separate publication \cite{carlsson2018perturbation2}, is a new perturbation theory for functional calculus of matrices, also known as matrix functions. As an example, suppose that both $A$ and $E$ are positive semi-definite, and suppose we are interested in approximating $\sqrt{A+E}$ for small $E$. If we introduce $B,C$ and $D$ for $\rho=0$, as in Section \ref{secO3}, we will show that $\sqrt{A+ E}$ can be written as $\sqrt{A}$ plus a term that depends linearly on $C, D$ and $\sqrt B$, where the error is $O(\|E\|^{3/2})$. Hence, the function $\sqrt{A+E}$ is sort of ``Fr\'{e}chet differentiable'' except for the non-linearity in the contribution $E\mapsto\sqrt{B}$, which nevertheless is given by a concrete construction. Based on this we also provide similar perturbation results for the matrix modulus $E\mapsto|A+E|$ for arbitrary (not necessarily self-adjoint) matrices $A,E$. The study of the matrix square root began with Cayley in 1858 \cite{cayley1858ii}, and in the light of its tremendous influence on pure and numerical analysis (see e.g.~Ch.~6-8, \cite{higham2008functions}), it is rather surprising that its perturbation theory is not yet fully understood.

\subsection{Related works}\label{secrel}

Matrix perturbation theory is an old and well-studied subject, of interest to engineers, physicists, applied and pure mathematicians. We list a few important contributions and relate key results to the theory developed here. It seems that E. Schr\"{o}dinger was one of the first to postulate some results and conjectures \cite{schrodinger1926quantisierung}, and these results are of key interest to mathematical physics and in particular quantum physics, where more complicated systems are considered as perturbations of simpler systems for which closed form solutions do exist. The classical example is the study of the hydrogen atom, see Example 3, Section XII.2, \cite{reed1980functional}. Many more interesting examples are found in the same chapter, and the ``Notes''-section contains a more extensive historical exposition.

F. Rellich was the first to systematically study the topic in a sequence of papers in the 30's and 40's (St\"{o}rungstheorie der Spektralzerlegung I-V), and in particular he proved Schr\"{o}dinger's conjectures and established analyticity of eigenvalues and eigenprojections for perturbations (depending analytically on one parameter) of self-adjoint operators. The area was very active through the 50's and 60's which led to the classic \cite{kato2013perturbation} by T.~Kato (see in part.~Sec.~6, Ch II), still today a key reference on matrix perturbation theory. This work is continued e.g.~in \cite{baumgartel1985analytic} as well as here in Section \ref{secline}. The main difference is that previous results are based on deep analytic and algebraic function theory, and not direct constructions as performed in this paper. Consequently, previous results are stronger in the sense that they provide stunning facts such as that both eigenvalues and eigenvectors can be taken as analytic functions, but weaker in the sense that they force you to freeze a direction $F$ and study only perturbations of the form $tF$ for $t\in\R$. 
As mentioned earlier, the coefficients in \eqref{tao} are known as the Rayleigh-Schr\"{o}dinger coefficients.

In parallel, global bounds for the perturbation of eigenvalues goes back to H. Weyl around 1910 \cite{weyl1912asymptotische}, where in particular the famous ``Weyl perturbation theorem'' is established. Improvements were then given e.g.~by Hoffman-Wielandt, Bauer-Fike, Mirsky and later Bhatia. It seems that \eqref{gersgorinalacarlsson} is a sort of local improvement of Weyl's, Bauer-Fike's and Hoffman-Wielandt's theorems, in the sense that these results give less accurate information on the eigenvalues of $A+E$ than \eqref{gersgorinalacarlsson} for small $E$, but on the other hand they apply globally as opposed to the asymptotic estimates given here. We refer to \cite{horn1990matrix} (Ch. 6), \cite{stewart1990matrix} (Ch. IV and V) and \cite{bhatia2013matrix} (Ch. VI and VII) for more information on this type of results.

Other more recent contributions to perturbation theory for Hermitian matrices include \cite{axelsson2006eigenvalue,hiriart1995sensitivity,lewis1999nonsmooth,mathias1997spectral,sun2002strong}, but the results are of a different nature than those presented here. For example, Section 9 of \cite{lewis1999nonsmooth} tries to understand the local behavior of eigenvalues using so called Clarke subdifferentials. In particular, the formula \eqref{vdf} appears in Corollary 10 of \cite{lewis1999nonsmooth} in a more general context. The recent article \cite{sjostrand2007elementary} treats the use of Schur complements in spectral problems, but seems to have no overlap with the present article. See also Ch. 15 of \cite{hogben2006handbook} for an overview of modern results. The fairly recent book \cite{zhang2006schur} contains a compilation of known uses of Schur complements, see in particular Section 2.5 containing a large amount of estimates on eigenvalues in the Hermitian case.

From the numerical perspective we mention the books \cite{golub2012matrix,higham2008functions,parlett1998symmetric} and \cite{wilkinson1965algebraic}. It seems that the results provided in this paper could be of interest also from an applied perspective, for example for fast evaluation of the singular value decomposition of a matrix $A+E$, given that the one for $A$ is known and that $E$ is small. However, this has to be investigated elsewhere.

Concerning perturbations of eigenvectors we refer to Ch. VII of \cite{bhatia2013matrix}. One of the key theorems in this field is by Davis and Kahan, which relates the angle between a spectral subspace of an operator $A$ with a different spectral subspace of a perturbation $A+E$, (see \cite{davis1970rotation} or Theorem VII.3.4 of \cite{bhatia2013matrix}). However, we have not found any concrete results on how the eigenvectors behave, similar to Theorem \ref{t2} or its extensions Proposition \ref{p1}.

\section{$O(\|E\|^2)$ control on the eigenvalues}\label{secO2}
In this section we prove Theorem \ref{t1} as well as the estimate \eqref{gersgorinalacarlsson} under the assumption that $\hat E$ is block-wise diagonal. Both these can be deduced as consequences of the refined estimate \eqref{est2} in Theorem \ref{teigenvalues}, but the proof we give here is much shorter and more elegant so it would be a pity not to include it. We thus consider the perturbation $A+E$ and we select a matrix $U=U_{A}$ so that $A=U_A\Lambda_\alpha U_A^*$ is the spectral decomposition of $A$, and moreover such that $\hat E$ is block-wise diagonal (see Definition \ref{defblock} in Section \ref{secO3}). Set $\hat E^d=\hat E\circ I$ and $\hat E^o=\hat E-\hat E^d$, where $d$ stands for ``diagonal'' and $o$ stands for ``off-diagonal''.
   We extend the definition of the matrix $M$ in \eqref{defM1} to include the case when $A$ has higher multiplicity, as follows \begin{equation}\label{M}M_{(i,j)}=\begin{cases}&\frac{1}{\alpha_i-\alpha_j},\quad \alpha_i\neq \alpha_j\\& 0, \quad \textit{else.}\end{cases}\end{equation} 
and note that
\begin{equation}\label{id}M\circ(\Lambda_\alpha \hat E-\hat E\Lambda_\alpha)=\Lambda_\alpha (M\circ \hat E)-(M\circ \hat E)\Lambda_\alpha=\hat E^o.\end{equation}
The key observation is that in the basis given by the columns of $I+M\circ \hat E$, the matrix $\Lambda_\alpha+\hat E$ becomes $\Lambda_\alpha+\hat E^d+O(\|E\|^2).$ To see this, note that $(I+M\circ \hat E)^{-1}=I-M\circ\hat E+O(\|E\|^2)$ and therefore \begin{equation}\label{lars}\begin{aligned}&(I+M\circ \hat E)(\Lambda_\alpha+\hat E)(I+M\circ \hat E)^{-1}=\\&\Lambda_\alpha+\hat E+(M\circ\hat E)\Lambda_\alpha-\Lambda_\alpha(M\circ\hat E)+O(\|E\|^2)=\Lambda_\alpha+\hat E^d+O(\|E\|^2).\end{aligned}\end{equation}
In particular $A+E$ and the matrix at the end of the above calculation share eigenvalues. As outlined in Section \ref{secsimple}, Ger\v{s}gorin's theorem now immediately implies that for each $j$, there exists an eigenvalue $\xi_j$ to $A+E$ such that \eqref{gersgorinalacarlsson} holds (i.e.~$\xi_j=\alpha_j+\hat E_{(j,j)}+O(\|E\|^2)$). Since we have assumed that $\hat E_{(i,i)}\geq \hat E_{(j,j)}$ whenever $i<j$ and $\alpha_i=\alpha_j$, it is clear that the ordering of $\xi$ can be chosen as the usual non-increasing one for small enough $E$. The above also implies Theorem \ref{t1}, because when $A$ has only simple eigenvalues,  $E\mapsto\diag(\hat E)=(u_1^*Eu_1,\ldots, u_n^*Eu_n)$ is a well defined linear functional that is independent of the choice of $U_A$.

Although this proof is fairly elementary, it is rather surprising that it seems to have gone unnoticed. Indeed, as argued in Section \ref{secsimple}, Ger\v{s}gorin's theorem applied to the original matrix $\Lambda_\alpha+\hat E$ only gives an error bound $O(\|E\|)$, which is not sufficient to determine the leading order perturbation. Nevertheless, the improved estimate \eqref{gersgorinalacarlsson} is not mentioned in any of the standard books on the topic (see Sec.~\ref{secrel}), and we have not been able to locate it elsewhere either. Usually, one finds the theorem by Bauer-Fike which for normal matrices states that $|\xi_j-\alpha_j|\leq \|E\|$, but for Hermitian matrices this is actually weaker than Weyl's theorem which states that $\alpha_j+\varepsilon_n\leq \xi_j\leq \alpha_j+\varepsilon_1$ where $\varepsilon_1,\ldots,\varepsilon_n$ are the eigenvalues of $E$. However, in terms of estimates of $|\xi_j-\alpha_j|$ by $\|E\|$, the statement $|\xi_j-\alpha_j|\leq \|E\|$ is clearly the best possible. It would therefore be wrong to claim that \eqref{gersgorinalacarlsson} improves Bauer-Fike or Weyl's results, it just gives more accurate information (locally) by providing the leading order term of the perturbation.

We have not been able to find any argument which leads to the stronger estimate \eqref{est2} in Theorem \ref{teigenvalues} based on Ger\v{s}gorin's theorem, and instead we will base our argument on a theorem of H.~Cartan (son of E.~Cartan) which we introduce in the next section.

\section{On a theorem by Henri Cartan}

The theorem of H.~Cartan we allude to is not his most famous result, but a very nice one with a beautiful proof. Like the theorem of Ger\v{s}gorin, it gives a number of discs with certain properties, but instead of dealing with matrices it deals with polynomials. More precisely it gives a lower bound for the modulus of a given polynomial outside of the discs, see e.g. Chapter I.7, \cite{levin1964distribution} or the original article \cite{cartan1928systemes}. We will apply this result to the characteristic polynomial of $A+E$ in order to establish the estimate \eqref{est2}. We here present a corollary of Cartan's theorem which is tailormade for our present purposes, and therefore rather involved.

\begin{lemma}\label{l2}
Let $\Omega_1,\Omega_2,\Omega_3\subset \C$ be open and bounded such that $\bar\Omega_1\subset\Omega_2$ and $\bar\Omega_2\subset\Omega_3$. Let $\mu_1,\ldots,\mu_k$ be points in $\Omega_1$ and consider \begin{equation}\label{g6}\varphi(\zeta)\prod_{j=1}^k(\zeta-\mu_j)+\psi(\zeta)\end{equation} where $\varphi~,\psi$ are analytic in a neighborhood of $\bar \Omega_3$ and moreover $|\varphi(\zeta)|>c_1$ for all $\zeta\in\bar\Omega_3$, where $c_1>0$ is some constant. Then there exist constants $c_2,~c_3>0$, independent of $\{\mu_j\}_{j=1}^k$, such that for all analytic $\psi$ with $\sup_{\zeta\in\Omega_3}|\psi(\zeta)|<c_2$, \eqref{g6} has precisely $k$ zeroes $\nu_1,\ldots,\nu_k$ in $\Omega_2$ and \begin{equation}\label{g8}\|\mu-\nu\|_2\leq c_3\left(\sup_{\zeta\in\Omega_3}|\psi(\zeta)|\right)^{1/k}.\end{equation} Moreover, \eqref{g6} can be written as
\begin{equation}\label{g7}\tilde\varphi(\zeta)\prod_{j=1}^k(\zeta-\nu_j)\end{equation} where $|\tilde\varphi(\zeta)|>c_1/2$ for all $\zeta\in\bar\Omega_2.$
\end{lemma}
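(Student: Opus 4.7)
The proof has four main ingredients: Rouch\'e's theorem to count the zeros, a pointwise product estimate coming from $f(\nu_i)=0$, a matching argument via H.~Cartan's theorem, and the minimum modulus principle to produce the factorization with $|\tilde\varphi|>c_1/2$. First, set $\delta_1=\dist(\partial\Omega_2,\bar\Omega_1)>0$. Since $\mu_j\in\Omega_1$, $|\zeta-\mu_j|\geq\delta_1$ uniformly on $\partial\Omega_2$, so $|\varphi(\zeta)\prod_j(\zeta-\mu_j)|\geq c_1\delta_1^k$ there. Choosing $c_2\leq c_1\delta_1^k/2$, any $\psi$ with $\sup_{\Omega_3}|\psi|<c_2$ satisfies $|\psi|<|\varphi(\zeta)\prod_j(\zeta-\mu_j)|$ on $\partial\Omega_2$, and Rouch\'e's theorem (applied on a suitable piecewise-smooth cycle in $\Omega_2$ close to $\partial\Omega_2$) yields exactly $k$ zeros $\nu_1,\ldots,\nu_k$ of $f:=\varphi\prod_j(\zeta-\mu_j)+\psi$ in $\Omega_2$, counted with multiplicity.

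Next, the pointwise product estimate: $f(\nu_i)=0$ gives $\varphi(\nu_i)\prod_j(\nu_i-\mu_j)=-\psi(\nu_i)$, so
\[\prod_{j=1}^k|\nu_i-\mu_j|\leq\frac{\sup_{\Omega_3}|\psi|}{c_1}=:H^k,\]
hence $\min_j|\nu_i-\mu_j|\leq H=(\sup|\psi|/c_1)^{1/k}$, and for $c_2$ small every $\nu_i$ sits in an intermediate $\Omega_{1.5}$ with $\bar\Omega_1\subset\Omega_{1.5}\subset\bar\Omega_{1.5}\subset\Omega_2$. Writing $p_\mu(\zeta)=\prod_j(\zeta-\mu_j)$ and $p_\nu(\zeta)=\prod_j(\zeta-\nu_j)$, the product bound places each $\nu_i$ in the sublevel set $\{|p_\mu|\leq H^k\}$. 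H.~Cartan's theorem applied to $p_\mu$ covers this sublevel set by finitely many discs whose radii sum to $O(H)$, uniformly in the $\mu_j$. A local Rouch\'e comparison of $p_\mu$ and $p_\nu$ on the boundary of each connected component of the cover shows that the counts of $\mu_j$'s and $\nu_i$'s in that component coincide, and hence produces a bijection $\pi$ with $|\nu_i-\mu_{\pi(i)}|=O(H)$; summing yields $\|\mu-\nu\|_2\leq c_3(\sup|\psi|)^{1/k}$.

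For the factorization, define $\tilde\varphi(\zeta):=f(\zeta)/p_\nu(\zeta)$, which extends to an analytic function on $\Omega_2$ with no zeros there (all of $f$'s zeros being accounted for by $\nu_1,\ldots,\nu_k$). Hence $1/\tilde\varphi$ is analytic on a neighborhood of $\bar\Omega_2$ and by the minimum modulus principle $|\tilde\varphi|$ attains its minimum on $\partial\Omega_2$. On $\partial\Omega_2$ I write
\[\tilde\varphi(\zeta)=\varphi(\zeta)\prod_j\frac{\zeta-\mu_j}{\zeta-\nu_{\pi(j)}}+\frac{\psi(\zeta)}{p_\nu(\zeta)},\]
and use the pairing bound $|\nu_{\pi(j)}-\mu_j|=O(H)$ together with the uniform lower bound $|\zeta-\nu_j|\geq\dist(\partial\Omega_2,\bar\Omega_{1.5})$ on $\partial\Omega_2$ to show the first term has modulus $|\varphi|(1-O(H))\geq c_1(1-O(H))$ while the second is $O(c_2)$. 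A final shrinking of $c_2$ forces both error terms below $c_1/4$, yielding $|\tilde\varphi|\geq c_1/2$ on $\partial\Omega_2$ and hence on all of $\bar\Omega_2$.

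The step that requires genuine work is the matching: the product inequality alone does not produce a bijection, and the naive ``nearest neighbor'' can fail when several $\nu_i$ cluster near a single $\mu_j$. Cartan's theorem delivers uniform control on the cluster size; the subtle point is verifying that within each connected component of the covering union of discs, the counts of $\mu_j$'s and $\nu_i$'s actually agree---this is where the local Rouch\'e/argument-principle comparison enters. Once the pairing is in hand the rest is routine, the only additional care being to shrink $c_2$ so that all implicit constants in the error bounds are absorbed and that $c_2,c_3$ depend only on $k$, $c_1$, and the geometry of the three sets $\Omega_i$, not on the specific choice of $\{\mu_j\}\subset\Omega_1$.
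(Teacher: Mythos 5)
Your proposal follows the same strategy as the paper's proof: Cartan's minimum-modulus theorem to produce a covering of $\{\mu_j\}$ (and hence of the $\nu_i$) by discs of small total radius, an argument-principle/Rouch\'e count on the boundaries of the disc components to pair the $\mu_j$ with the $\nu_i$, and the minimum modulus principle applied to $\tilde\varphi=f/p_\nu$ on $\bar\Omega_2$ to obtain the lower bound $c_1/2$. One slip to fix: the ``local Rouch\'e comparison of $p_\mu$ and $p_\nu$'' must actually be a comparison of $\varphi\,p_\mu$ with $f=\varphi\,p_\mu+\psi$, since you have no a priori bound on $|p_\mu-p_\nu|$; Cartan gives $|p_\mu|>H^k$ on the boundary of the disc union, hence $|\varphi\,p_\mu|>c_1H^k\geq|\psi|$ there, so $\varphi\,p_\mu$ and $f$ have the same number of zeros in each component, and since the zeros of $f$ in $\Omega_2$ are precisely the $\nu_i$ this yields the desired matching.
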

\begin{proof}
We write $\|\psi\|=\sup_{\zeta\in\Omega_3}|\psi(\zeta)|$ in this proof. By Cartan's theorem on the minimum modulus of polynomials, there exists a collection of circles, each containing at least one $\mu_k$, such that the sum of the radii is bounded by $2e\|\psi/\varphi\|^{1/k}$ and \begin{equation}\label{gtde}|\prod_{j=1}^k(\zeta-\mu_j)|>\|\psi/\varphi\|\end{equation} outside of these circles. Since $\|\psi/\varphi\|\leq c_2/c_1$, it follows that the circles are inside of $\Omega_2$ if $c_2$ is small enough. By a basic residue calculus we conclude that the amount of zeroes in each circle is the same for $\varphi\prod_{j=1}^k(\zeta-\mu_j)$ and the perturbation $\varphi\prod_{j=1}^k(\zeta-\mu_j)+\psi$ (integrate $\frac{(\varphi(\zeta)\prod_{j=1}^k(\zeta-\mu_j)+t\psi(\zeta))'}{\varphi(\zeta)\prod_{j=1}^k(\zeta-\mu_j)+t\psi(\zeta)}$ on the boundary of each circle, $0\leq t\leq 1$, noting that the denominator is non-zero by \eqref{gtde}). Ordering them such that $\mu_j$ and $\nu_j$ is in the same circle, we conclude that \begin{equation}\label{po0}|\mu_j-\nu_j|\leq 4e\|\psi/\varphi\|^{1/k}\leq \frac{4e}{c_1^{1/k}}\|\psi\|^{1/k},\end{equation} from which \eqref{g8} follows.

In a similar manner we can use Rouch\'{e}'s theorem on $\Omega_3$ to show that the amount of zeroes is constant in $\Omega_3$, given that $c_2$ is sufficiently small. The function \begin{equation}\label{po1}\tilde\varphi(\zeta)=\frac{\varphi(\zeta)\prod_{j=1}^k(\zeta-\mu_j)+\psi(\zeta)}{\prod_{j=1}^k(\zeta-\nu_j)}\end{equation} is thus analytic and zero free on $\Omega_3$, so the minimum modulus theorem shows that it attains its minimum on the boundary. If $d$ is a separating distance between $\Omega_2$ and $\Omega_3^c$, we have $$\left|\frac{\zeta-\mu_j}{\zeta-\nu_j}\right|\geq 1-\frac{|\nu_j-\mu_j|}{|\zeta-\nu_j|}\geq 1-\frac{4ec_2^{1/k}}{c_1^{1/k}d}$$ for all $\zeta\in\partial\Omega_3$, where we used \eqref{po0}. The expression in \eqref{po1} is thus bounded below on $\Omega_3$ by $$\left(1-\frac{4e c_2^{1/k}}{c_1^{1/k}d}\right)^kc_1-\frac{c_2}{d^k}.$$ If $c_2$ is small enough, it follows that the inequality $\|\tilde\varphi\|>c_1/2$ can be achieved independently of $\{\mu_j\}_{j=1}^k$.

\end{proof}

\textit{Remark:} In the above lemma, the ordering of $\mu$ and $\nu$ becomes important. However, when applied to characteristic polynomials of Hermitian matrices, we know that $\mu$ and $\nu$ consist of real numbers, which have a natural non-increasing ordering. In this case, the inequality \eqref{g8} is valid with this ordering, as follows by a classical reordering lemma stating that, for fixed values in $\mu$ and $\nu$ but not fixed ordering, $\scal{\mu,\nu}$ is maximized by choosing both sequences non-increasingly ordered, see e.g.~Corollary II.4.3 in \cite{bhatia2013matrix}.

\section{$O(\|E\|^3)$ control on the eigenvalues}\label{gre}

We now prove Theorem \ref{teigenvalues}, i.e.~the estimate \eqref{est2}. Pick a particular eigenvalue $\rho$ of $A\in\H_n$ and denote its multiplicity by $l\geq 1$. Set $m=n-l$ and let $\tau$ be a vector containing the $m$ remaining eigenvalues. To explain the main ideas behind the proof, we first introduce a decomposition of the space in which $B,C$ and $D$ gets more concrete definitions, i.e.~we wish to omit the projections $P_\rho$ and $P_\rho^{\perp}$ from our formulas. To this end, we pick  a basis of orthonormal eigenvectors $U_A=[U_{A1}~ U_{A2}]$ where $U_{A1}$ spans the eigenspace of $\rho$ and $U_{A2}$ its orthogonal complement. We write the spectral decomposition \eqref{spec} as \begin{equation}\label{sang}A=U_A\Lambda_\alpha U_A^*=[U_{A1} ~U_{A2}]\left(
                                                                 \begin{array}{cc}
                                                                   \rho I_l & 0 \\
                                                                   0 & \Lambda_\tau \\
                                                                 \end{array}
                                                               \right) [U_{A1} ~U_{A2}]^*
\end{equation}
where $\Lambda_\tau$ is diagonal of size $m\times m$ and $I_l$ denotes the identity matrix of size $l\times l$. $\tau$ is thus the vector of eigenvalues not equal to $\rho$. Note that this implies an unusual ordering on the eigenvalues on the diagonal of $\Lambda_\alpha$, unless $\rho$ happens to be the largest one. Consider again a perturbation $A+E$ and let $\hat E$ as before equal $U_A^*E U_A$, which then naturally decomposes as \begin{equation}\label{u7}\hat E=\left(
                                                                 \begin{array}{cc}
                                                                   \hat E_{11} & \hat E_{12} \\
                                                                   \hat E_{21} & \hat E_{22} \\
                                                                 \end{array}
                                                               \right)=[U_{A1}~ U_{A2}]^*E [U_{A1}~ U_{A2}].\end{equation}
With these definitions we now introduce $ \hat E_{12}=C$, $\hat E_{22}=D$ and \begin{equation}\label{defB}B=\hat E_{11}-C(\Lambda_\tau-\rho I_m+D)^{-1}C^*.\end{equation} It follows by basic linear algebra that $B,~C,~D$ in the new basis are identical with $B,~C,~D$ as introduced in Section \ref{secO3}, except for technicalities. For example, $B$ (as introduced here) equals the matrix representation of the restriction of $B$ (as introduced there) to $\Ran P_\rho$ in the basis provided by $U_{A1}$. Since we have freedom in the choice of $U_A$ when $A$ has eigenvalues with higher multiplicity, we again need to pick it with the perturbation $E$ in mind, but now we will not use the alternative which makes $\hat E$ block diagonal. Rather, given a fixed perturbation $E$ we pick $U_A$ so that $B$ is diagonal and $D_{(i,j)}=0$ whenever $\alpha_i=\alpha_j$.

We now make yet another simplification. Since the eigenvalues of $A-\rho I_n+E$ are just translates by $\rho$ of the eigenvalues of $A+E$, there is no restriction to assume that $\rho=0$. With this assumption, it follows that $A$ has rank $m$ and that $A+E$ in the new basis can be written
\begin{equation}\label{spec_dec0}\Lambda_\alpha+\hat E=\left(
                                                                 \begin{array}{cc}
                                                                   0 & 0 \\
                                                                   0 & \Lambda_\tau \\
                                                                 \end{array}
                                                               \right)+\left(
                                                                 \begin{array}{cc}
                                                                   \Lambda_\beta+C(\Lambda_\tau+D)^{-1}C^* & C \\
                                                                   C^* & D \\
                                                                 \end{array}
                                                               \right).\end{equation}
The matrix $B$, which we denote by $\Lambda_\beta$ to stress that it is diagonal, is the so called Schur complement of the block $\Lambda_\tau+\hat E_{22}$  in the matrix $\Lambda_\alpha+\hat E$. The vector $\beta$ contains its eigenvalues, ordered non-increasingly.

We now express $\Lambda_\alpha+\hat E$ in a new basis which is not orthonormal, and thereby $\Lambda_\alpha+\hat E$ does not become self-adjoint in the new representation. The basis is given by the columns of $\left(
\begin{array}{cc}
I & 0 \\
-(\Lambda_\tau+D)^{-1}C^*  & I \\
\end{array}
\right)$. Since \begin{align*}
&\left(
\begin{array}{cc}
I & 0 \\
(\Lambda_\tau+D)^{-1}C^*  & I \\
\end{array}
\right)
\left(
\begin{array}{cc}
I & 0 \\
-(\Lambda_\tau+D)^{-1}C^*  & I \\
\end{array}
\right)=
\left(
\begin{array}{cc}
I & 0 \\
0 & I \\
\end{array}
\right).
\end{align*}
this becomes
\begin{align*}
&\left(
\begin{array}{cc}
I & 0 \\
(\Lambda_\tau+D)^{-1}C^*  & I \\
\end{array}
\right)
\left(
\begin{array}{cc}
\Lambda_\beta+C(\Lambda_\tau+D)^{-1}C^* & C \\
C^* & \Lambda_\tau+D \\
\end{array}
\right)
\left(
\begin{array}{cc}
I & 0 \\
-(\Lambda_\tau+D)^{-1}C^* & I \\
\end{array}
\right)=\\&\left(
\begin{array}{cc}
\Lambda_\beta & C \\
(\Lambda_\tau+D)^{-1}C^*\Lambda_\beta & \Lambda_\tau+D+(\Lambda_\tau+D)^{-1}C^*C\\
\end{array}
\right).
\end{align*}
In particular, the matrix
\begin{align}\label{matrix}
\left(
\begin{array}{cc}
\Lambda_\beta & C \\
(\Lambda_\tau+D)^{-1}C^*\Lambda_\beta & \Lambda_\tau+D+(\Lambda_\tau+D)^{-1}C^*C\\
\end{array}
\right)\end{align}
 shares eigenvalues with $A+E$. This matrix has elements that are $O(\|C\|)$ in the 2nd quadrant and elements that are $O(\|B\|\|C\|)$ in the 3rd quadrant, where we number the quadrants of such a matrix by $\left(
\begin{array}{cc}
1 & 2 \\
3  & 4 \\
\end{array}
\right)$. Thus, examining the determinant that leads to the characteristic polynomial, any off-diagonal contributions lead to terms which are bounded by suitable powers of $\|B\|\|C\|^2$, and this is the key observation underlying the proof of Theorem \ref{teigenvalues}, along with Cartan's theorem. We shall prove the following theorem, which is a stronger version of Theorem \ref{teigenvalues} adapted to the present environment. We omit the routine details of how to lift this result to the more general environment of Theorem \ref{teigenvalues}.
\begin{theorem}\label{teigenvalues1}
Let $A_0\in \H_{n}$ have rank $m$ and set $l=n-m$. Then there are radii $r_A,~r_E>0$ and a constant $c>0$ such that the following holds; Given any other matrix $A$ with rank $m$ and $\|A-A_0\|_2<r_A$, the eigenvalues $\xi$ of $A+E$ can be ordered such that the first $l$ satisfies \begin{equation}\label{est22}|\xi_j-\beta_j|<c\|B\|\|C\|^2,\quad 1\leq j\leq l\end{equation}
for all $E\in\H_{n}$ with $\|E\|<r_E$.
\end{theorem}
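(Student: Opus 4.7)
I would analyze the characteristic polynomial of the matrix $M$ in \eqref{matrix}, which shares eigenvalues with $A+E$, by factoring it through a Schur complement. Let $T(\zeta):=\zeta I_m-\Lambda_\tau-D-(\Lambda_\tau+D)^{-1}C^*C$ denote the lower-right block of $\zeta I-M$. For a small disc $\Omega$ around $0$ and for $A,E$ varying in suitable small neighborhoods, $T(\zeta)$ is invertible on $\bar\Omega$ uniformly (this uses that the nonzero spectrum of $A_0$ is bounded away from $0$ and depends continuously on $A$ under the rank constraint). Schur's formula then gives
\[
\det(\zeta I-M)=\det T(\zeta)\cdot g(\zeta),\qquad g(\zeta):=\det\bigl(\zeta I_l-\Lambda_\beta-S(\zeta)\bigr),
\]
with $S(\zeta):=C\,T(\zeta)^{-1}(\Lambda_\tau+D)^{-1}C^*\Lambda_\beta$. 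Since $\det T\neq0$ on $\bar\Omega$, the eigenvalues of $A+E$ lying in $\Omega$ are precisely the zeros of $g$ in $\Omega$. The key structural observation is that $S(\zeta)$ carries the factors $C$, $C^*$, and $\Lambda_\beta$ explicitly, so $\|S(\zeta)\|_2\le K\|C\|^2\|B\|$ uniformly for $\zeta\in\Omega$, with $K$ depending only on $A_0$.

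Writing $g(\zeta)=\prod_{j=1}^l(\zeta-\beta_j)+\psi(\zeta)$, a principal-minor expansion combined with the bound $|\det(S(\zeta)_{S\times S})|\le\|S(\zeta)\|_2^{|S|}$ yields $\sup_\Omega|\psi|=O(\|B\|\|C\|^2)$. Lemma~\ref{l2} applied with $\varphi\equiv1$, $\mu_j=\beta_j$, and suitable nested subsets of $\Omega$ then implies that $g$ has exactly $l$ zeros in $\Omega$, which ordered non-increasingly $\xi_1\ge\cdots\ge\xi_l$ satisfy the preliminary estimate $|\xi_j-\beta_j|=O((\|B\|\|C\|^2)^{1/l})$. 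To upgrade this to the desired linear rate I would invoke the Bauer--Fike theorem: $g(\xi_j)=0$ means $\xi_j$ is an eigenvalue of $\Lambda_\beta+S(\xi_j)$, and since $\Lambda_\beta$ is diagonal, Bauer--Fike yields $\min_i|\xi_j-\beta_i|\le\|S(\xi_j)\|_2\le K\|B\|\|C\|^2$. Combined with the preliminary ordering from Cartan and the elementary rearrangement inequality (that for two non-increasing real sequences of equal length the natural pairing minimizes $\max_j|x_j-y_j|$ over all pairings, which follows by checking that no transposition reduces the maximum), this gives $|\xi_j-\beta_j|\le c\,\|B\|\|C\|^2$ for every $j$, which is \eqref{est22}.

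The chief technical difficulty will be the uniformity of the constants $K,r_A,r_E$ as $A$ ranges over a neighborhood of $A_0$ under the rank constraint; this reduces to continuity of the nonzero spectrum of $A$ at $A_0$ together with uniform boundedness of $T(\zeta)^{-1}$ and $(\Lambda_\tau+D)^{-1}$ on $\bar\Omega$, both standard once $\Omega$, $r_A$, $r_E$ are taken small enough. A secondary subtlety is the matching step when some $\beta_j$'s cluster (so Bauer--Fike only controls the minimum distance), which is resolved by first localizing each $\xi_j$ near the cluster through the Cartan-based ordering and then concluding via the rearrangement argument above.
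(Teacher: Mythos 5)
Your Schur-complement reformulation of the characteristic polynomial is a genuine simplification of the paper's argument: the paper expands the full $n\times n$ determinant of the matrix \eqref{matrix} as a sum over all $n!$ permutations and runs a delicate finite induction, one permutation term at a time, whereas you first reduce to the $l\times l$ function $g(\zeta)=\det(\zeta I_l-\Lambda_\beta-S(\zeta))$ with $\|S(\zeta)\|=O(\|B\|\|C\|^2)$. That part is correct and cleaner.

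However, the matching step at the end has a genuine gap, and I do not believe it can be repaired by Bauer--Fike plus a rearrangement lemma alone. Write $\epsilon=\|B\|\|C\|^2$. Cartan (your Lemma~\ref{l2} applied once with all $l$ zeros in play) only gives $|\xi_j-\beta_j|=O(\epsilon^{1/l})$, because a perturbation of size $\epsilon$ of a degree-$l$ polynomial can move zeros by $\epsilon^{1/l}$; this exponent loss is precisely what the theorem is fighting. Bauer--Fike applied to $\Lambda_\beta+S(\xi_j)$ gives $\min_i|\xi_j-\beta_i|\le K\epsilon$ \emph{for each $j$ separately}, with the matrix changing with $\xi_j$; it does \emph{not} produce a bijection between $\{\xi_j\}$ and $\{\beta_i\}$. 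The rearrangement fact you cite --- that the monotone pairing minimizes $\max_j|x_j-y_j|$ over all pairings --- is true, but it only helps once you already \emph{have} some bijection $\sigma$ with $|\xi_j-\beta_{\sigma(j)}|\le K\epsilon$ for all $j$, which Bauer--Fike does not supply. Concretely, if two $\beta$'s are separated by a gap on the order of $\epsilon^{1/l}$ (which the Cartan bound cannot rule out), Bauer--Fike is compatible with both corresponding $\xi$'s sitting near the same $\beta$, and no rearrangement inequality rescues you; you would end up with $|\xi_j-\beta_j|\sim\epsilon^{1/l}$, not $O(\epsilon)$.

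The missing idea, which is where the paper does all its work, is that the perturbation $\psi(\zeta):=g(\zeta)-\prod_j(\zeta-\beta_j)$ is not just uniformly $O(\epsilon)$: expanded by minors, it is a sum of terms, each of which is $O(\epsilon^{k})$ multiplied by a product of exactly $l-k$ of the factors $(\zeta-\beta_j)$. The paper exploits this by applying Cartan separately to each such term, perturbing only $k$ zeros at a time by a quantity of size $\epsilon^{k}$, so the displacement per step is $(\epsilon^{k})^{1/k}=\epsilon$, and then runs a finite induction (with shrinking nested domains $\Omega_1\subset\Omega_2\subset\Omega_3$) so the constants do not blow up. Your route can be completed the same way: expand $g$ by minors of $S(\zeta)$, observe the same size/degree matching, and run the paper's induction on the $l\times l$ determinant instead of the $n\times n$ one --- this would be a genuine simplification. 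But as written, lumping the whole error into one $\psi$ of size $O(\epsilon)$ and trying to recover the linear rate via Bauer--Fike does not work.
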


\begin{proof} First note that by \eqref{defB} and standard estimates, there exists a constant $K$ such that \begin{equation}\label{estB}\|B\|\leq K\|E\|\end{equation}
for all $E$ in a sufficiently small neighborhood of 0. By \eqref{matrix}, the eigenvalues of $A+E$ are solutions to the equation
\begin{align}\label{determinant}
\det\left(
\begin{array}{cc}
\Lambda_\beta-\zeta I& C \\
O(\|B\|\|C\|) & \Lambda_\tau-\zeta I+ O(\|E\|)  \\
\end{array}
\right)=0\end{align}
where the constants involved in the ordo terms are independent of $A$, given that $r_A$ and $r_E$ are sufficiently small (to ensure that $(\Lambda_\tau+D)^{-1}$ is uniformly bounded in $A,E$). 
Let $\pi_1,\ldots,\pi_{n!}$ be an enumeration of all permutations on $\{1,\ldots,n\}$ with $\pi_1=Id$. If $X(\zeta)$ denotes the matrix in \eqref{determinant}, the equation can then be written $$0=\sum_{j=1}^{n!}\mathsf{sgn}(\pi_j)\prod_{i=1}^{n}X_{i,\pi_j(i)}(\zeta)=\sum_{j=1}^{n!}\mathsf{sgn}(\pi_j)p_j(\zeta)$$ where $p_j(\zeta)=\prod_{i=1}^{n}X_{i,\pi_j(i)}(\zeta)$ and the first term is $$p_1(\zeta)=\prod_{i=1}^{n}X_{i,\pi_1(i)}(\zeta)=\prod_{i=1}^{l}(\beta_{i}-\zeta)\prod_{i=1}^{m}(\tau_{i}+O(\|E\|)-\zeta).$$
Assume that $r_A,r_E$ are chosen such that \begin{equation}\label{hf}|\tau_{i}+O(\|E\|)|\geq 3 K r_E\end{equation} for all $A$, $E$ and $1<i\leq m$. Here we use that the terms denoted $O(\|E\|)$ have uniform bounds independent of $A$. We will in the remainder often omit explicit mentioning of this detail. We remark that this is the only requirement on $r_A$, whereas the requirements on $r_E$ will be updated on several occasions. Let $\Omega_3$ (c.f. Lemma \ref{l2}) be the open disc around 0 with radius $2 K r_E$.

We will show, using induction, that there exists numbers  $c,~r_E$ and $c_1$ such that whenever $\|E\|<r_E$, each $\sum_{j=1}^{t}\prod_{i=1}^{n}X_{i,\pi_j(i)}(\zeta)$ is of the form $$\varphi_t(\zeta)\prod_{i=1}^{l}(\mu_i-\zeta)$$ (c.f. \eqref{g6}) for some numbers $\mu_1,\ldots,\mu_l$ satisfying \begin{equation}\label{t}\|\mu-\beta\|_2\leq c\|B\|\|C\|^2,\end{equation} and \begin{equation}\label{textra}\{\mu_i\}_{i=1}^l\subset \left(\sum_{j=1}^t 2^{-j}\right)\Omega_3,\end{equation} and an analytic function $\varphi_t$ satisfying \begin{equation}\label{textra1}\inf_{\zeta\in\Omega_3}|\varphi_t(\zeta)|>c_1.\end{equation}

This is certainly true for $t=1$, even with $c=0$. Indeed, then $\mu=\beta$ and the right hand side of \eqref{textra} becomes a disc with radius $K r_E$ which clearly includes $\{\beta_{i}\}_{i=1}^m$ by \eqref{estB}, so we only need to find a suitable number $c_1$ for \eqref{textra1}. Since $\varphi_1(\zeta)=\prod_{i=1}^{m}(\tau_{i}+O(\|E\|)-\zeta)$, it suffices to pick $c_1$ any number smaller than $(3Kr_E-2Kr_E)^m=(Kr_E)^m$ by \eqref{hf}.

Now suppose the induction hypothesis is true for some fixed $t\geq 1$ and consider \begin{equation}\label{pl}p_{t+1}(\zeta)=\prod_{i=1}^{n}X_{i,\pi_{t+1}(i)}(\zeta).\end{equation} Suppose that $k$ is such that $l-k$ is the amount of diagonal entries from the first quadrant, and suppose for simplicity that these are ordered such that they come on indices $i=k+1,\ldots,l$. The corresponding part of $p_{t+1}$ then equals $$\prod_{i=k+1}^{l}(\beta_{i}-\zeta).$$ Moreover, from each of the first $k$ rows we must get contributions from the 2:nd quadrant unless $p_{t+1}\equiv 0$ (since off-diagonal elements in the first quadrant are zero), i.e.~elements which are $O(\|C\|)$. Similarly, from each of the first $k$ columns we must get contributions from the 3:rd quadrant, i.e.~elements which are $O(\|B\|\|C\|)$. Returning to \eqref{pl} we thus conclude that \begin{equation}\label{pl1}p_{t+1}(\zeta)=\psi(\zeta)\prod_{i=k+1}^{l}(\beta_i-\zeta)\end{equation} where $\sup_{\zeta\in\Omega_3}|\psi(\zeta)|<c_4(\|B\|\|C\|^2)^{k}$ for some constant $c_4$ (and moreover this constant only depends on $r_A$ and $r_E$, not on $A$ or $E$).

By our induction hypothesis, $\sum_{j=1}^{t+1}\prod_{i=1}^{n}X_{i,\pi_j(i)}(\zeta)$ is of the form \begin{equation}\label{g9}\varphi_t(\zeta)\prod_{i=1}^{l}(\mu_i-\zeta)+\psi(\zeta)\prod_{i=k+1}^{l}(\beta_i-\zeta)\end{equation}
where $\inf_{\zeta\in\Omega_3}{|\varphi_t(\zeta)|}>c_1$ and $\|\mu-\beta\|_2<c\|B\|\|C\|^2.$ Upon writing
$${\psi}(\zeta)\prod_{i=k+1}^{l}(\beta_i-\zeta)={\psi}(\zeta)\prod_{i=k+1}^{l}\Big((\beta_i-\mu_i)-(\zeta-\mu_i)\Big)$$
and expanding, this term turns into a finite number of terms (at most $2^l$) which each, possibly after reordering, have the form \begin{equation}\label{gty}\tilde{\psi}(\zeta)\prod_{i=\tilde k+1}^{l}(\mu_i-\zeta)\end{equation} where $\tilde k\geq k$ and  $\sup_{\zeta\in\Omega_3}|\tilde{\psi}(\zeta)|<c_5(\|B\|\|C\|^2)^{\tilde{k}}$ for some $c_5$ (which can be chosen independent of $A$ and be valid for all such terms, here we use the induction hypothesis \eqref{t}). We now need to do another finite induction step, but for brevity we will provide less details. Adding the first such term to the first term of \eqref{g9}, i.e. to $\varphi_t(\zeta)\prod_{i=1}^{l}(\mu_i-\zeta)$, we get $$\left(\varphi_t(\zeta)\prod_{i=1}^{\tilde{k}}(\mu_i-\zeta)-\tilde{\psi}\right)\prod_{i=\tilde{k}+1}^{l}(\mu_i-\zeta).$$
By Lemma \ref{l2}, applied with $\Omega_1=\left(\sum_{j=1}^t 2^{-j}\right)\Omega_3$ and $\Omega_2=\left(\sum_{j=1}^t 2^{-j}+2^{-(t+1)}\frac{1}{2^l}\right)\Omega_3$, this can be written \begin{equation}\label{gty1}\tilde\varphi_t(\zeta)\left(\prod_{i=1}^{\tilde{k}}(\nu_i-\zeta)\right)\left(\prod_{i=\tilde{k}+1}^{l}(\mu_i-\zeta)\right) \end{equation}
where the new sequence of zeroes $(\nu_1\ldots \nu_{\tilde{k}},\mu_{\tilde{k}+1},\ldots \mu_l)$ lie in $\Omega_2$ and differs from $\mu$ by a norm bounded by the constant $c_3c_5^{1/\tilde k}$ times $\|B\|\|C\|^2$, and $\tilde\varphi_t(\zeta)$ is uniformly bounded below by $c_1/2$ on $\bar \Omega_3$.

The form of \eqref{gty} and \eqref{gty1} are the same, except that the new zeroes in \eqref{gty1} lie in the slightly larger disc $\Omega_2$. We may thus repeat the process, (albeit with new values of the involved constants), and at the $q$:th step we apply Lemma \ref{l2} with $$\Omega_1=\left(\sum_{j=1}^t 2^{-j}+2^{-(t+1)}\frac{(q-1)}{2^l}\right)\Omega_3$$ and
$\Omega_2=\left(\sum_{j=1}^t 2^{-j}+2^{-(t+1)}\frac{q }{2^l}\right)\Omega_3$. Since the process stops at most at $q=2^l$, we conclude that \eqref{g9} can be written as \begin{equation}\label{g19}\varphi_{t+1}(\zeta)\prod_{i=1}^{l}(\tilde\mu_i-\zeta)\end{equation}
with $\inf_{\zeta\in\Omega}{|\varphi_{t+1}(\zeta)|}>c_1/2^{2^l}$ and $\|\mu-\tilde\mu\|_2<c_6\|B\|\|C\|^2$ for some constant $c_6$. We also get that \eqref{textra} is satisfied and finally, by the triangle inequality we see that \eqref{t} continues to hold for the new zeroes $\tilde{\mu}$, albeit with larger constant $c+c_6$. Redefining $c_1$ to equal $c_1/2^{2^l}$ (which we can do since this is a \textit{finite} induction argument), this concludes the induction argument.

To see that this implies \eqref{est22}, note that at the end of the induction, when $t=n!$, the corresponding vector $\mu$ constitute the $l$ smallest (in modulus) solutions to \eqref{determinant}, and hence \eqref{est22} follows immediately from \eqref{t} with $\xi_j=\mu_j,$ $1\leq j \leq l$.

\end{proof}

The inversion of $\Lambda_\tau+D$, needed to compute $B$, can be computationally expensive, and therefore it is interesting to note that the theorem holds unchanged if the $D$ is removed.

\begin{corollary}
The above theorem holds with \eqref{est22} replaced by \begin{equation}\label{est22tweek}|\xi_j-\tilde\beta_j|<c\|E\|^3,\quad 1\leq j\leq l,\end{equation} where $\tilde \beta$ are the eigenvalues of $\tilde B$ defined by $\tilde B=\hat E_{11}-C\Lambda_\tau^{-1}C^*$
\end{corollary}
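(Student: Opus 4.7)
The plan is to derive the corollary directly from Theorem \ref{teigenvalues1} by showing that the spectra of $B$ and $\tilde B$ are themselves $O(\|E\|^3)$-close. Since Theorem \ref{teigenvalues1} already gives $|\xi_j-\beta_j|<c\|B\|\|C\|^2\le c'\|E\|^3$, it will suffice by the triangle inequality to establish $|\beta_j-\tilde\beta_j|=O(\|E\|^3)$ (under compatible orderings of $\beta,\tilde\beta$).

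First I would compare $B$ and $\tilde B$ via the standard resolvent identity
\begin{equation*}
(\Lambda_\tau+D)^{-1}-\Lambda_\tau^{-1}=-\Lambda_\tau^{-1}D(\Lambda_\tau+D)^{-1},
\end{equation*}
which is valid for sufficiently small $\|E\|$ since $\Lambda_\tau$ is invertible (recall $\rho$ has been normalized to $0$, so the diagonal entries of $\Lambda_\tau$ are bounded away from zero by the same argument as in \eqref{hf}). Multiplying by $C$ on the left and $C^*$ on the right yields
\begin{equation*}
\tilde B-B=C\Lambda_\tau^{-1}D(\Lambda_\tau+D)^{-1}C^*,
\end{equation*}
and standard operator-norm estimates, together with the uniform bound on $\|(\Lambda_\tau+D)^{-1}\|$ used in the proof of Theorem \ref{teigenvalues1}, give $\|\tilde B-B\|\le c_0\|C\|^2\|D\|\le c_1\|E\|^3$ for some constants $c_0,c_1$ independent of $A$.

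Next, since $B$ and $\tilde B$ are both Hermitian $l\times l$ matrices, Weyl's inequality (or equivalently the Lipschitz continuity of the ordered spectrum of a Hermitian matrix in the operator norm) gives $|\beta_j-\tilde\beta_j|\le\|\tilde B-B\|\le c_1\|E\|^3$ when both vectors are ordered non-increasingly. Combining with Theorem \ref{teigenvalues1} yields
\begin{equation*}
|\xi_j-\tilde\beta_j|\le|\xi_j-\beta_j|+|\beta_j-\tilde\beta_j|\le c\|B\|\|C\|^2+c_1\|E\|^3\le c'\|E\|^3,
\end{equation*}
which is exactly \eqref{est22tweek}. There is no real obstacle in this argument; the only mild subtlety is keeping track of the ordering when invoking Weyl's inequality, but since both $\beta$ and $\tilde\beta$ are taken in the non-increasing convention this is automatic.
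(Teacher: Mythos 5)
Your proposal is correct and follows essentially the same route as the paper: bound $\|B-\tilde B\|=O(\|E\|^3)$, apply Weyl's inequality to transfer this to the eigenvalues, and combine with Theorem~\ref{teigenvalues1} by the triangle inequality; you merely spell out the resolvent identity and the triangle-inequality step that the paper leaves implicit. (Note a harmless sign slip: $\tilde B-B = -C\Lambda_\tau^{-1}D(\Lambda_\tau+D)^{-1}C^*$, but since you only use the norm this does not affect the argument.)
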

\begin{proof}
Clearly $B-\tilde B=C\Lambda_\tau^{-1}C^*-C(\Lambda_\tau+D)^{-1}C^*$ differ by a term which is $O(\|E\|^3)$, and thus the desired result follows by Theorem \ref{teigenvalues1} and Weyl's inequality.
\end{proof}

\section{An approximate spectral decomposition}\label{secspec}

In the previous section the matrix $U_A$ was chosen in a rather delicate manner, depending on $E$. In this section we switch the roles. Let $A=U_A\Lambda_\alpha U_A^*$ be any spectral decomposition of $A\in\H_n$, and consider perturbation of eigenvectors for $A+E$. As is well known, the eigenvectors are discontinuous. To see this, just note that $I+E$ gets the eigenvectors of $E$ which is highly discontinuous as a function of $E$ near 0. Nevertheless, given a fixed matrix $U_A$ we will distinguish a rather large subset of $E$'s where the perturbed eigenvectors are well behaved. Again, the structure of the Schur complements turns out to be decisive.

To be more specific, given a fixed eigenvalue $\rho$ of $A$, we define $B,C$ and $D$ as in the previous section, and get
\begin{equation}\label{spec_dec2}\Lambda_\alpha+\hat E=\left(
                                                                 \begin{array}{cc}
                                                                   \rho I_l & 0 \\
                                                                   0 & \Lambda_\tau \\
                                                                 \end{array}
                                                               \right)+\left(
                                                                 \begin{array}{cc}
                                                                   B+C(\Lambda_\tau-\rho I_m+D)^{-1}C^* & C \\
                                                                   C^* & D \\
                                                                 \end{array}
                                                               \right)\end{equation}
where for simplicity we assume that we have chosen the basis so that $\rho$ is the ``first'' eigenvalue, but we do not assume that $\rho=0$ as in \eqref{spec_dec0}. We will refer to $B$ simply as the Schur complement for $\rho,$ leaving its dependence on $\rho$ implicit in the notation. The eigenvalues of a given Schur complement $B$ will be denoted $\beta.$

Given any $c>0$ we define the set $\V_c$ to be the set of matrices $E$ such  that the Schur complement $B$ is diagonal, $B=\Lambda_\beta$, for each eigenvalue $\rho$ of $A$, and moreover satisfies \begin{equation}\label{conic1}|\beta_{i}-\beta_{j}|\geq c\|E\|\end{equation} whenever $i\neq j$ and $\alpha_i=\alpha_j$. Again we underline that $E$ here depends on $U_A$, in contrast to the opposite case in the previous section. For such perturbations we claim that \begin{equation}\label{vap}U_{ap}(E)=U_A(I-M\circ \hat E)\end{equation} is an approximate set of eigenvectors to $A+E$. To begin with, note that
$U_{ap}^*=(I+M\circ \hat E)U_A^*$ due to the structure of $M$, so that
$U_{ap}$ is ``orthonormal up to errors of $O(\|E\|^2)$'', in the sense that $U_{ap}^*U_{ap}= I+O(\|E\|^2)$. Using \eqref{id} we also see that
\begin{equation}\label{spec_dec_pert}A+E= U_{ap}(\Lambda_\alpha+\hat E^d)U_{ap}^*+O(\|E\|^2)\end{equation}
by which $U_{ap}$ in a sense qualifies for an approximate matrix of eigenvectors. However, the above formulas unfortunately do not imply that $U(E)=U_{ap}(E)+O(\|E\|^2)$ for any matrix of actual eigenvectors $U$. The next proposition provides this additional information, at least for $E\in\V_c$.

\begin{proposition}\label{p1}
Given $c>0$ there is a function $U:\V_c\rightarrow O_n$ of normalized eigenvectors to $A+ E$ that satisfies $$U(E)=U_{ap}(E)+O(\|\hat E\|^2),\quad E\in \mathcal{V}_{c}.$$
\end{proposition}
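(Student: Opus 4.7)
The plan is to analyze one eigenvalue of $A$ at a time. Fix an eigenvalue $\rho$ of $A$ with multiplicity $l$, reorder so that the $\rho$-block occupies the first $l$ indices (so the index $i$ is $0$), and recall that the hypothesis $E\in\V_c$ means the Schur complement is $B=\Lambda_\beta$ with $|\beta_k-\beta_{k'}|\geq c\|E\|$ for $k\neq k'$.

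The first step is to re-use the non-orthogonal similarity $T$ from Section~\ref{gre} (with $\Lambda_\tau$ replaced throughout by $\Lambda_\tau-\rho I$), which conjugates $\Lambda_\alpha-\rho I+\hat E$ into the matrix~\eqref{matrix}. The lower-left block is $O(\|B\|\|C\|)=O(\|E\|^2)$ while the lower-right block equals $\Lambda_\tau-\rho I+O(\|E\|)$, uniformly invertible for small $E$. Theorem~\ref{teigenvalues1} then furnishes $l$ eigenvalues $\xi_k=\rho+\beta_k+O(\|E\|^3)$ pairwise separated by at least $c\|E\|/2$, and combined with the $O(1)$ gap across blocks every eigenvalue of $A+E$ is simple, so its unit eigenvector is unique up to phase.

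For each $\xi_k$, I would solve the block eigenvector equations directly. Writing the eigenvector in the $T$-basis as $(v,w)^\top$, the second block equation forces $w=O(\|E\|^2)\|v\|$, since its coefficient for $w$ is $\Lambda_\tau-\rho I+O(\|E\|)$ (bounded and invertible) while the coupling term to $v$ involves the $O(\|E\|^2)$ third-quadrant block. Substituting back into the first equation $\Lambda_\beta v+Cw=(\xi_k-\rho)v$ yields $(\Lambda_\beta+K)v=(\xi_k-\rho)v$ with $\|K\|=O(\|E\|^3)$, the extra factor $\|E\|$ coming from $C=O(\|E\|)$ multiplied against the $O(\|E\|^2)$ estimate for $w$. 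Standard perturbation of the diagonal $\Lambda_\beta$, whose eigenvalue gaps are at least $c\|E\|$, then gives $v=e_k+O(\|E\|^2)$. Applying $T$ produces the eigenvector in the $U_A$-basis as $\bigl(e_k,\,-(\Lambda_\tau-\rho I+D)^{-1}C^*e_k\bigr)^\top+O(\|E\|^2)$, while a direct expansion of $(I-M\circ\hat E)e_k$ yields $\bigl(e_k,\,-(\Lambda_\tau-\rho I)^{-1}C^*e_k\bigr)^\top$; these agree to $O(\|E\|^2)$ because $(\Lambda_\tau-\rho I+D)^{-1}-(\Lambda_\tau-\rho I)^{-1}=O(\|E\|)$ composed with $C=O(\|E\|)$ gives $O(\|E\|^2)$.

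Finally I would normalize (the constructed vector has norm $1+O(\|E\|^2)$) and fix the phase to match $U_{ap}$, then repeat for every eigenvalue $\rho$ of $A$ to assemble $U(E)\in O_n$; columns associated with different $\rho$'s are automatically orthogonal because the corresponding eigenvalues are distinct. The main obstacle is the within-block spectral gap of only $O(\|E\|)$, which by naive perturbation bounds would only provide eigenvector accuracy of $O(\|E\|^2)/O(\|E\|)=O(\|E\|)$. The resolution uses both halves of the $\V_c$ hypothesis together with the Schur-complement block-triangular structure of \eqref{matrix}: the former keeps the in-block gap of order $\|E\|$, while the latter ensures the effective block-to-complement coupling is only $O(\|E\|^2)$ (rather than $O(\|E\|)$), so the reduced in-block perturbation $K$ is of order $\|E\|^3$ rather than $\|E\|^2$, yielding the claimed $O(\|E\|^2)$ accuracy.
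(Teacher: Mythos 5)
Your proposal is correct, and it reaches the paper's conclusion by a genuinely different route in the crucial step. The first reduction is essentially the paper's: you conjugate $\Lambda_\alpha-\rho I+\hat E$ by the block lower-triangular matrix $T$ to obtain \eqref{matrix}, whereas the paper instead right-multiplies $\Lambda_\alpha+\hat E-\xi_q I$ by $T^{-1}=\tilde U_{ap}$ to get \eqref{matrix1}; since \eqref{matrix}$-\xi_q I=T\cdot$\eqref{matrix1}, the two matrices have the same null space and so both strategies look for the same vector. The divergence is in how that vector is estimated. The paper uses a Cramer/adjugate argument: it sets $u_p=\det X_p$ (the matrix \eqref{matrix2} with the $q$:th row replaced by $e_p$), then shows by hand that $|u_q|\gtrsim\|E\|^{l-1}$ while $|u_p|=O(\|E\|^{l+1})$ for $p\neq q$, so that $v_p=u_p/u_q=O(\|E\|^2)$. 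You instead eliminate the second block: the lower-right coefficient $\Lambda_\tau-\rho I+O(\|E\|)$ is uniformly invertible and the coupling $(\Lambda_\tau+D)^{-1}C^*\Lambda_\beta$ is $O(\|B\|\|C\|)=O(\|E\|^2)$, hence $w=O(\|E\|^2)\|v\|$; substituting back multiplies another $C=O(\|E\|)$, giving a reduced $l\times l$ problem $(\Lambda_\beta+K)v=(\xi_q-\rho)v$ with $\|K\|=O(\|E\|^3)$, and the gap hypothesis $|\beta_i-\beta_j|\geq c\|E\|$ then delivers $v=e_q+O(\|E\|^2)$ via first-order eigenvector perturbation. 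Both proofs balance the same $O(\|E\|^3)$-versus-$O(\|E\|)$ dichotomy; yours makes it explicit and conceptual, at the mild cost of invoking a standard (non-self-adjoint, since \eqref{matrix} is not Hermitian and $K$ depends weakly on $\xi_q$) eigenvector-perturbation estimate, which is worth stating precisely in a full write-up. The paper's cofactor computation is longer but entirely elementary and self-contained, requiring only the determinant expansion. The remaining steps in your sketch --- comparing $\tilde U_{ap}e_q$ with $U_{ap}e_q$, noting that $(\Lambda_\tau-\rho I+D)^{-1}-(\Lambda_\tau-\rho I)^{-1}=O(\|E\|)$ hits $C^*=O(\|E\|)$ to give $O(\|E\|^2)$, renormalizing by $1+O(\|E\|^2)$, and assembling over the distinct eigenvalues of $A$ --- match the paper's and are fine.
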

\begin{proof}
Let $\rho$ be an arbitrary eigenvalue of $A$. Note that the eigenvalues of $A+E$ are distinct for small enough $E$, due to Theorem \ref{teigenvalues1}, and hence the eigenvectors of $A+E$ are unique up to multiplication with unimodular numbers. Since the eigenvectors are unaffected by adding the identity matrix, there is no restriction to assume that $\rho=0$, and clearly we can assume that $A=\Lambda_\alpha$ and hence that $U_A=I_n$. We adopt all settings from the previous section. Note that $\hat E=E$ so we will omit the hat from the notation. In particular, we can write $A+E=\Lambda_\alpha+ E$ as in \eqref{spec_dec2} with $\rho=0$ and $B=\Lambda_\beta$. If $l$ is the multiplicity of $\rho$, we define $m=n-l$ and we let $q$ be any number $1\leq q\leq l.$ We let $\xi_q$ denote the eigenvalue belonging to $\rho+\beta_q$ via Theorem \ref{teigenvalues1}.

We shall show that the eigenvector for the eigenvalue $\xi_q$ has the form stipulated by the theorem. We have \begin{equation}\label{eigv}U_{ap}(E)=\left(
                                                                 \begin{array}{cc}
                                                                   I & C\Lambda_\tau^{-1} \\
                                                                   -\Lambda_\tau^{-1}C^*  & I-M_{22}\circ D \\
                                                                 \end{array}
                                                               \right)\end{equation}
(where $M_{22}$ denotes the fourth quadrant of the matrix $M$), and hence we shall show that the $q:$th column of this is at most $O(\|E\|^2)$ distance from a unit norm eigenvector to $\xi_q$. The first $l$ columns of the above matrix coincide up to errors of size $O(\|E\|^2)$ with $$\tilde U_{ap}=\left(
\begin{array}{cc}
I & 0 \\
-(\Lambda_\tau+D)^{-1}C^* & I \\
\end{array}
\right).$$ Multiplying $\Lambda_\alpha+\hat E-\xi_q I_n=\left(
                                                                 \begin{array}{cc}
                                                                   \Lambda_\beta-\xi_q I_l+C(\Lambda_\tau+D)^{-1}C^* & C \\
                                                                   C^* & \Lambda_\tau+D-\xi_q I_m \\
                                                                 \end{array}
                                                               \right)$ with this matrix we get
\begin{align}\label{matrix1}
\left(
\begin{array}{cc}
\Lambda_\beta-\xi_q I_l & C \\
\xi_q(\Lambda_\tau+D)^{-1}C^* & \Lambda_\tau+D-\xi_q I_m \\
\end{array}
\right).\end{align} Note that the desired statement follows by basic considerations if we show that \eqref{matrix1} has a concrete vector $v_q=v_q(E)$ of the form $e_q+O(\|E\|^2)$ in its kernel, where $e_1,\ldots,e_n$ is the canonical basis. Indeed, then
$$(\Lambda_\alpha+\hat E-\xi_q I_n)\tilde U_{ap} v_q=0$$ so $\xi_q$ has an eigenvector of the form $\tilde U_{ap}(e_q+O(\|E\|^2))$. Normalizing this vector amounts to division by $\sqrt{1+O(\|E\|^2)}$ which is the same as multiplying by $1+O(\|E\|^2)$, and hence this would imply that $\xi_q$ has a unit norm eigenvector of the form $\tilde U_{ap}e_q+O(\|E\|^2)$, and we can define the matrix $U=U(E)$ as the matrix whose $q:$th column is this specific eigenvector. Since $\tilde U_{ap}e_q$ and $U_{ap}e_q$ differ by a term of size $O(\|E\|^2)$, the proof would be complete.

We thus conclude by showing that \eqref{matrix1} has a vector in the kernel of the desired form. To begin with, note that \eqref{matrix1} has the form \begin{align}\label{matrix2}
\left(
\begin{array}{cc}
\Lambda_\beta-\xi_q I_l & O(\|E\|) \\
O(\|E\|^2) & \Lambda_\tau+O(\|E\|) \\
\end{array}
\right).\end{align}
Since the above kernel is one-dimensional, as noted earlier, it follows by basic linear algebra that a non-trivial vector $u$ in the kernel can be obtained as follows; let $X_p$ denote the matrix in \eqref{matrix2} with the $q:$th row replaced by $e_p$, and set $u_p=\det X_p$.

We now analyze the size of the $u_p$'s in $u$. First of all, if $p=q$ then the diagonal contribution to $u_q=\det X_q$ is bounded below (in modulus) by $c_1(\|E\|^{l-1})$ for some constant $c_1$, at least in a neighborhood of 0. Here we use the fact that $\xi_q=\beta_q+O(\|E\|^3)$ by Theorem \ref{teigenvalues1} and the assumption that  $E\in\V_c$. Any other term in the expansion of $\det X_q$ will contain factors that are at least $O(\|E\|^l)$, and hence there exists a neighborhood of 0 such that $$|u_q|\geq \frac{c_1}{2}\|E\|^{l-1}.$$

If $l < p \leq n$, the $q:$th column of $X_p$ is $O(\|E\|^2)$, whereas each of the remaining $l-1$ first columns are of size $O(\|E\|)$. Consider a non-zero term in the formula $$u_p=\det X_p=\sum_{j=1}^{n!}\mathsf{sgn}(\pi_j)\prod_{i=1}^{n}(X_p)_{i,\pi_j(i)}.$$ Since it contains precisely one element from each column, we deduce that each such term is $O(\|E\|^{l+1})$. Summing up we see that \begin{equation}\label{xr}u_p=O(\|E\|^{l+1}).\end{equation}

We finally wish to draw the same conclusion in the case when $1\leq p\leq l$ but $p\neq q$. Since the $q:$th row only contains one non-zero element (in position $p$), this one must be included in the product $\prod_{i=1}^{n}X_p({i,\pi_j(i)})$ (for it to be non-zero). This means that the $p:$th row must draw a non-diagonal element. The only non-zero such element must be in the 2nd quadrant, and hence gives a factor of $O(\|E\|)$. On top of that, there is one $O(\|E\|)$ contribution from each of the first $l$ columns, except column $p$ and also except column $q$, which actually gives an $O(\|E\|^2)$-contribution. Summing up this gives a term of $O(\|E\|\cdot\|E\|^{l-2}\cdot \|E\|^2)$, which again gives \eqref{xr}, as desired.

Now set $v=\frac{u}{u_q}$. Then $v_q=1$ and $v_p=O(\|E\|^2)$ for $p\neq q,$ and the proof is complete.
\end{proof}

Unfortunately, the set $\V_c$ is not invariant under multiplication by scalars, due to the non-linear term $C(\Lambda_\tau-\rho I_m +D)^{-1}C^*$ in the definition of the Schur complement. In particular, the above theorem can not help us to compute the Gateaux derivative. However, a given Schur complement for a given eigenvalue $\rho$ only differs from the corresponding submatrix of $E$ by the term $C(\Lambda_\tau-\rho I_m +D)^{-1}C^*$ which is $O(\|E\|^2)$. It is therefore tempting to suppose that the definition of $U(E)$ can be extended outside $\V_c$ so that the formula \eqref{vap} is the first order term also for matrices $E$ such that $\hat E_{(i,j)}=0$ and $|\hat E_{(i,i)}-\hat E_{(j,j)}|>c\|E\|$ whenever $\alpha_i=\alpha_j$ cf.~\eqref{conic1} (i.e.~for block diagonal matrices with block-wise decreasing diagonal elements). Surprisingly, this turns out to be false. We provide an example showing this, and in the next section we provide a correct formula for the Gateaux derivative.

Consider $A+E$ for the matrix $$A=\left(
                                                                                                                                           \begin{array}{ccc}
                                                                                                                                             0 & 0 & 0 \\
                                                                                                                                             0 & 0 & 0 \\
                                                                                                                                             0 & 0 & 1 \\
                                                                                                                                           \end{array}
                                                                                                                                         \right)$$
and the perturbation $E=tF$ where $$F=\left(
                                                                                                                                           \begin{array}{ccc}
                                                                                                                                             1 & 0 & 1 \\
                                                                                                                                             0 & 0 & 1 \\
                                                                                                                                             1 & 1 & 0 \\
                                                                                                                                           \end{array}
                                                                                                                                         \right)$$
and $t$ is a parameter that we will have approach zero. For the eigenvalue $\rho=0$ of $A$ we then get $$C(\Lambda_\tau+D)^{-1}C^*=\left(
                                    \begin{array}{cc}
                                      t^2 & t^2 \\
                                      t^2 & t^2 \\
                                    \end{array}
                                  \right)
$$ and $B=\left(
                                    \begin{array}{cc}
                                      t-t^2 & -t^2 \\
                                      -t^2 & -t^2 \\
                                    \end{array}
                                  \right)
$. Applying \eqref{spec_dec_pert} to  $B/t$ we conclude that $$B=t\left(
                                    \begin{array}{cc}
                                      1 & t \\
                                      -t & 1 \\
                                    \end{array}
                                  \right)\left(
                                    \begin{array}{cc}
                                      1-t & 0 \\
                                      0 & -t \\
                                    \end{array}
                                  \right)\left(
                                    \begin{array}{cc}
                                      1 & -t \\
                                      t & 1 \\
                                    \end{array}
                                  \right)+O(t^3)$$ and due to Proposition \ref{p1} we know that the approximate eigenvectors $\left(
                                    \begin{array}{cc}
                                      1 & t \\
                                      -t & 1 \\
                                    \end{array}
                                  \right)$ are correct to the first order. Let us therefore chose a new basis given by the columns of a matrix $V=V(t)$ so that the first two vectors coincide with these in the first two components. However, we also wish to pick $V$ to be unitary so we alter it as follows $V =\left(
                                                                                                                                           \begin{array}{ccc}
                                                                                                                                             \sqrt{1-t^2} & t & 0 \\
                                                                                                                                             -t & \sqrt{1-t^2} & 0 \\
                                                                                                                                             0 & 0 & 1 \\
                                                                                                                                           \end{array}
                                                                                                                                         \right).$
We then get $\tilde E=V^*EV=\left(
                                                                                                                                           \begin{array}{ccc}
                                                                                                                                             t & t^2 & t-t^2 \\
                                                                                                                                             t^2 & 0 & t+t^2 \\
                                                                                                                                             t-t^2 & t+t^2 & 0 \\
                                                                                                                                           \end{array}
                                                                                                                                         \right)+O(t^3),$ and $A+E$ gets the representation $A+\tilde E$.
Moreover, the Schur complement is now given by $$\tilde B=\left(
                                    \begin{array}{cc}
                                      t-t^2 & 0 \\
                                      0 & -t^2 \\
                                    \end{array}
                                  \right)+O(t^3)
$$
which clearly is closer to a diagonal matrix than the original Schur complement. We would like to apply Proposition \ref{p1} to $A+\tilde E$ in the new basis, but this is still impossible due to the off-diagonal $O(t^3)$-terms. However, an inspection of the proof reveals that the proposition still can be applied when if we relax the assumption that each Schur-complement $B$ is diagonal, to just assuming that the off diagonal elements are $O(\|E\|^3)$. We assume for the moment that the details of this claim has been verified, and note that when applying  Proposition \ref{p1} to $A+\tilde E$ we can take $U_A=I$, hence $\hat{\tilde{E}}=\tilde E$ and $$M=\left(
                                                                                                                                           \begin{array}{ccc}
                                                                                                                                             0 & 0 & -1 \\
                                                                                                                                             0 & 0 & -1 \\
                                                                                                                                             1 & 1 & 0 \\
                                                                                                                                           \end{array}
                                                                                                                                         \right).$$
 Proposition \ref{p1} therefore gives that  $A+\tilde E$ has normalized eigenvectors of the form $$\left(
                                                                                                                                           \begin{array}{ccc}
                                                                                                                                             1 & 0 & t \\
                                                                                                                                             0 & 1 & t \\
                                                                                                                                             -t & -t & 1 \\
                                                                                                                                           \end{array}
\right)+O(t^2),$$ which in turn gives that $A+E$ has normalized eigenvectors of the form
$$V\left(
                                                                                                                                           \begin{array}{ccc}
                                                                                                                                             1 & 0 & t \\
                                                                                                                                             0 & 1 & t \\
                                                                                                                                             -t & -t & 1 \\
                                                                                                                                           \end{array}
\right)+O(t^2)=\left(
                                                                                                                                           \begin{array}{ccc}
                                                                                                                                             1 & t & t \\
                                                                                                                                             -t & 1 & t \\
                                                                                                                                             -t & -t & 1 \\
                                                                                                                                           \end{array}
\right)+O(t^2).$$
The correct Gateaux derivative of the matrix $F$ is therefore $\left(
                                                                                                                                           \begin{array}{ccc}
                                                                                                                                             0 & 1 & 1 \\
                                                                                                                                             -1 & 0 & 1 \\
                                                                                                                                             -1 & -1 & 0 \\
                                                                                                                                           \end{array}
\right)$ and not $-M\circ F=\left(
                                                                                                                                           \begin{array}{ccc}
                                                                                                                                             0 & 0 & 1 \\
                                                                                                                                             0 & 0 & 1 \\
                                                                                                                                             -1 & -1 & 0 \\
                                                                                                                                           \end{array}
\right)$, which \eqref{vap} might falsely lead us to believe. The fact that \eqref{vap} fails is even more surprising while considering that the two leading order terms for $\xi$, as given in \eqref{tao}, are unaffected by whether $A$ has simple eigenvalues or not. Is it true that all terms in the series-expansion of $\xi(t)$ are independent in the same way?

A final remark on the example; upon evaluating the eigenvectors of $A+0.01 F$ on a computer one sees that $\left(
                                                                                                                                           \begin{array}{ccc}
                                                                                                                                             1 & 0.01 & 0.01 \\
                                                                                                                                             -0.01 & 1 & 0.01 \\
                                                                                                                                             -0.01 & -0.01 & 1 \\
                                                                                                                                           \end{array}
\right)$ are eigenvectors with 3 decimals precision. In the next section we prove the above claims in a general framework.

\section{Perturbations along a line and Gateaux differentiability}\label{secline}

In this section we consider perturbations of the form $A+tF$ for general $A,F\in\H_n$, where $t$ is a real parameter. We pick a unitary matrix $U_{A,F}$ that diagonalizes $A$ and moreover is chosen so that $\hat F=U_{A,F}^*FU_{A,F}$ satisfies $\hat F_{(i,j)}=0$ whenever $\alpha_i=\alpha_j$, i.e.~such that $\hat F$ is block-wise diagonal. Given a diagonal matrix $\Lambda$ we recall the notation $\Lambda^{\dagger}$ for the Moore-Penrose inverse, which in this case reduces to another diagonal matrix whose diagonal elements are the inverse of those in $\Lambda$, except when the diagonal element is 0 in which case $\Lambda^\dagger$ has a 0 as well. Theorem \ref{teigenvalues1} can in this situation be made more explicit, as follows.

\begin{theorem}\label{t4}
Given matrices $A,F\in \H_n$ and $U_{A,F}\in O_n$ such that $F$ is blockwise diagonal, the eigenvalues  $\xi(t)$ of $A+tF$ can be ordered so that they are real analytic functions with Taylor expansion at 0 given by \begin{equation}\label{r}\xi_j(t)=\alpha_j+t\hat F_{(j,j)}-t^2(\hat F^*(\Lambda_\alpha-\alpha_j I)^\dagger \hat F)_{(j,j)}+O(t^3).\end{equation}
\end{theorem}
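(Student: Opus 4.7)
The plan is to use Rellich's theorem (e.g.~\cite{baumgartel1985analytic}, Ch.~3.5) as a black box for the analyticity statement, and then extract the first two Taylor coefficients by feeding the one-parameter family $E=tF$ into Theorem \ref{teigenvalues}. Since $\|B\|\le O(\|E\|)=O(t)$ and $\|C\|=O(t)$, that theorem yields
$$\xi_k(t)=\alpha_k+\beta_{k-i}(t)+O(t^3),\qquad i<k\le i+l,$$
where $\rho=\alpha_j$, the block indices are $i+1,\ldots,i+l$, and $\beta(t)$ lists the non-increasing eigenvalues of the Schur complement $B(t)$. The whole problem thus reduces to Taylor-expanding $B(t)$ and reading off its eigenvalues up to order $t^2$.

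A Neumann expansion of the pseudoinverse on the invariant subspace $\Ran P_\rho^\perp$ yields
$$B(t)=tP_\rho F P_\rho-t^2P_\rho F(\Lambda_\alpha-\rho I)^\dagger FP_\rho+O(t^3).$$
Here the block-wise diagonality of $\hat F$ pays off immediately: $P_\rho F P_\rho$ is already diagonal in the $U_{A,F}$-basis, with diagonal entries $\hat F_{(k,k)}$ for $i<k\le i+l$. Using the residual gauge freedom I would further refine $U_{A,F}$ by a unitary change of basis inside each sub-block $\{k:\hat F_{(k,k)}=\gamma\}$; on such a sub-block $P_\rho F P_\rho$ is a scalar, so the refinement does not spoil block-wise diagonality of $\hat F$, and it can be used to diagonalise the Hermitian matrix $P_\rho F(\Lambda_\alpha-\rho I)^\dagger F P_\rho$ on each such sub-block. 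After this refinement, $B(t)/t$ is a perturbation of the diagonal matrix $\Lambda:=P_\rho F P_\rho$ whose off-diagonal entries are $O(t^2)$, while its diagonal entries deviate from those of $\Lambda$ by $O(t)$.

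A direct Ger\v{s}gorin estimate, or equivalently the conjugation \eqref{lars} powered by the commutator identity \eqref{id} from Section \ref{secO2}, then gives
$$\beta_{k-i}(t)/t=\hat F_{(k,k)}-t\bigl(P_\rho F(\Lambda_\alpha-\rho I)^\dagger F P_\rho\bigr)_{(k,k)}+O(t^2).$$
Since $\hat F=\hat F^*$ and $(\Lambda_\alpha-\rho I)^\dagger$ annihilates $\Ran P_\rho$, the bracketed quantity equals $(\hat F^*(\Lambda_\alpha-\alpha_k I)^\dagger\hat F)_{(k,k)}$. Multiplying by $t$ and substituting into the first display produces \eqref{r}. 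The main obstacle is exactly the refinement step above: the Section \ref{secO2} argument returns an eigenvalue error of $O(\|\cdot\|^2)$ only when the perturbation is block-wise diagonal relative to the base matrix, and it is precisely to secure this hypothesis at the level of $B(t)/t$ that the extra unitary freedom inside each level set $\{\hat F_{(k,k)}=\gamma\}$ must be spent; once this is arranged, the reconciliation of the Rellich analytic ordering with the non-increasing ordering used in Theorem \ref{teigenvalues} is routine because both orderings agree for small $t>0$ after the refinement.
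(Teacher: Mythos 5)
Your overall strategy---feed $E=tF$ into Theorem~\ref{teigenvalues}, Taylor-expand the Schur complement $B(t)$ via a Neumann series, and read off its eigenvalues---is the same strategy the paper follows. The mechanics differ somewhat: the paper first conjugates $\Lambda_\alpha+tF$ by the $t$-dependent matrix $V(t)=I+tN(\hat F)+O(t^2)$, which forces the resulting Schur complement $B(t)$ to have off-diagonal entries of size $O(t^3)$, and then re-runs the Cartan argument of Theorem~\ref{teigenvalues1} on this nearly diagonal $B(t)$; you instead invoke Theorem~\ref{teigenvalues} as a black box and then run the Section~\ref{secO2} Ger\v{s}gorin/conjugation argument a second time, at the level of $B(t)/t$, after making $B(t)/t$ block-wise diagonal relative to $\Lambda_\varphi=P_\rho FP_\rho$ by a fixed unitary rotation inside each level set $\{k:\hat F_{(k,k)}=\gamma\}$. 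That nested black-box route is legitimate and arguably cleaner, but it exposes a gap in your final step.

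The refinement changes the formula you are trying to prove. A unitary $W$ supported on a level set $\{k:\hat F_{(k,k)}=\gamma\}$ commutes with $(\Lambda_\alpha-\alpha_jI)^\dagger$ and fixes $\hat F_{(j,j)}$ (since $\hat F$ restricted to that set is $\gamma I$), but it replaces $\hat F^*(\Lambda_\alpha-\alpha_jI)^\dagger\hat F$ by $W^*\hat F^*(\Lambda_\alpha-\alpha_jI)^\dagger\hat F\,W$, whose $(j,j)$-entry is not invariant. So your last display establishes \eqref{r} for the \emph{refined} $\hat F$, not the one supplied in the hypotheses, and ``substituting into the first display'' does not yield \eqref{r} as stated. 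This is not merely cosmetic: take $A=\dia(0,0,1)$ and the Hermitian $F$ with $F_{(1,1)}=F_{(2,2)}=F_{(1,2)}=0$ and $F_{(1,3)}=F_{(2,3)}=1$. Then $\hat F=F$ is block-wise diagonal, the characteristic polynomial of $A+tF$ is $-\lambda(\lambda^2-\lambda-2t^2)$, so the two small eigenvalues are $0$ and $-2t^2+O(t^4)$, whereas \eqref{r} in the given basis predicts $-t^2+O(t^3)$ for both $j=1,2$. The correct $t^2$-coefficients are the \emph{eigenvalues} of the relevant sub-block of $-\hat F^*(\Lambda_\alpha-\alpha_jI)^\dagger\hat F$, not its diagonal entries. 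Formula \eqref{r} therefore needs the additional hypothesis that $\hat F$ has block-wise strictly decreasing diagonal elements---which is exactly the hypothesis under which the paper actually carries out the proof (inside the proof of Theorem~\ref{t5}, where $N(\hat F)$ is well defined and the level sets are singletons, so your refinement is vacuous). Either add that hypothesis, under which your argument goes through essentially as written, or weaken the conclusion to a suitably refined $U_{A,F}$; as it stands, the proposal proves a different statement than \eqref{r}.
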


Note that $\xi$ is not necessarily ordered decreasingly in the above theorem, and that the formula \eqref{tao} from the introduction is a simple reformulation of \eqref{r}. The real analyticity of $\xi$ is a classical result by F. Rellich, see e.g. Theorem 6.1 in Chapter II of \cite{kato2013perturbation}, the main feature of the above theorem is formula \eqref{r} which gives concrete expressions for the first 3 terms, and this will be shown along with Theorem \ref{t5} below. As mentioned in the introduction, the formula \eqref{r} appears in \cite{lancaster1964eigenvalues} and can also be found in older texts on quantum mechanics, such as \cite{hilbert1955methods}, but this is rather hard to see and in either case the proofs presented here are new.

Theorem 6.1 by \cite{kato2013perturbation} also provides real analyticity of the corresponding eigenprojections, and it is even shown that it is possible to chose the eigenvectors as real analytic functions as well (as functions of $t$). This is a most remarkable result due to the fact that it is impossible to even define the eigenvectors continuously (as a function of $E$). In any case, a concrete formula for the leading order terms is very hard to find, although it is implicit in the details of Chapter 5.13 of \cite{hilbert1955methods}. We repeat this find here with a new proof.

 Assuming that $\hat F$ has block-wise decreasing diagonal elements, we recall that $\hat F$ is unique and introduce the matrix $N=N(\hat F)$ as follows;
\begin{equation}\label{N}N_{(i,j)}=\left\{\begin{array}{ll}
      0, & \text{if $\alpha_i\neq \alpha_j$ or $i=j$}  \\
      \frac{(F^*(\Lambda_\alpha-\alpha_jI_n)^\dagger F)_{(i,j)}}{\hat F_{(i,i)}-\hat F_{(j,j)}}, & \text{else}
    \end{array}\right.
\end{equation}
In the example from the previous section we would get $N(\hat F)=\left(
                                                                                                                                           \begin{array}{ccc}
                                                                                                                                             0 & 1 & 0 \\
                                                                                                                                             -1 & 0 & 0 \\
                                                                                                                                             0 & 0 & 0 \\
                                                                                                                                           \end{array}
\right)$ (where implicitly $U_{A,F}=I$ so $\hat F=F$). The key result of this section reads as follows:

\begin{theorem}\label{t5}
Given matrices $A,F\in \H_n$ and $U_{A,F}\in O_n$ such that $\hat F$ is block-wise diagonal with block-wise decreasing diagonal elements, there exists a unitary matrix $U(t)$ defined for $t$ in a neighborhood of 0, such that \begin{itemize}\item[$a)$] $U$ is real analytic and the $j$:th column is an eigenvector for the eigenvalue $\xi_j$ to $A+tF$ via \eqref{r}. \item[$b)$] $U(t)=U_{A,F}+tU_{A,F}(N(\hat F)-M\circ\hat F)+O(t^2)$. \end{itemize}
\end{theorem}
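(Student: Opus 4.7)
My plan is first to reduce to the case $A=\Lambda_\alpha$, $U_{A,F}=I$ (so $\hat F=F$); the general statement is then recovered by conjugating the constructed eigenvectors by $U_{A,F}$. In this reduced setting I would invoke Rellich's analyticity theorem (Theorem~6.1, Ch.~II of \cite{kato2013perturbation}) to produce real analytic eigenvalue branches $\xi_j(t)$ and corresponding real analytic unit eigenvector curves $u_j(t)$ of $A+tF$ in a neighborhood of $0$. These branches are determined only up to permutation and an overall analytic unimodular phase; since $F$ is block-wise diagonal, the leading slope of each branch is some diagonal entry $F_{(j,j)}$, and the assumption that these entries are \emph{strictly} decreasing within each block separates the slopes inside every block. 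This lets me label the branches uniquely and fix $u_j(0)=e_j$.

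With the branches labeled, I would expand $U(t)=I+tX+t^2Y+O(t^3)$ and $\Lambda_\xi(t)=\Lambda_\alpha+tD_1+t^2D_2+O(t^3)$, substitute into $(A+tF)U(t)=U(t)\Lambda_\xi(t)$, and match coefficients. At order $t$ this gives the Sylvester identity $\Lambda_\alpha X-X\Lambda_\alpha=D_1-F$: on the diagonal it forces $(D_1)_{(j,j)}=F_{(j,j)}$; on cross-block entries it fixes $X_{(i,j)}=-F_{(i,j)}/(\alpha_i-\alpha_j)=-(M\circ F)_{(i,j)}$; on within-block off-diagonal entries it is vacuous (since $F_{(i,j)}=0$ there by hypothesis), so those entries of $X$ remain undetermined at this order.

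To close the system I would pass to the order-$t^2$ equation $\Lambda_\alpha Y-Y\Lambda_\alpha=XD_1-FX+D_2$, whose left hand side vanishes on every $(i,j)$ with $\alpha_i=\alpha_j$. Computing $(FX)_{(i,j)}$ on such pairs, splitting the sum according to whether $\alpha_k=\alpha_i$ or not and using $F_{(i,k)}=0$ for $k$ in the same block as $i$, $k\neq i$, one obtains
\begin{equation*}
(FX)_{(i,j)}=F_{(i,i)}X_{(i,j)}-(F^*(\Lambda_\alpha-\alpha_jI)^\dagger F)_{(i,j)},\qquad \alpha_i=\alpha_j.
\end{equation*}
The diagonal case $i=j$ then reads off $(D_2)_{(j,j)}=-(F^*(\Lambda_\alpha-\alpha_jI)^\dagger F)_{(j,j)}$, which is precisely the $t^2$-coefficient in \eqref{r} and so establishes $(a)$ (and incidentally completes the statement of Theorem~\ref{t4}). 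The within-block off-diagonal case gives $X_{(i,j)}(F_{(j,j)}-F_{(i,i)})=-(F^*(\Lambda_\alpha-\alpha_jI)^\dagger F)_{(i,j)}$, which by the strict decrease hypothesis can be solved as $X_{(i,j)}=N_{(i,j)}$. Finally, the diagonal of $X$ is fixed by unitarity: $U(t)^*U(t)=I$ at order $t$ forces $X+X^*=0$, and the remaining freedom to multiply each $u_j(t)$ by an analytic phase $e^{\ii\phi_j(t)}$ with $\phi_j(0)=0$ allows me to set $X_{(j,j)}=0$. Thus $X=N(\hat F)-M\circ\hat F$, which is part $(b)$.

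The main obstacle I anticipate is the clean identification between Rellich's abstract analytic branches and the particular labeling demanded by the theorem; once the strict decrease of block-wise diagonals is used to separate first-order slopes inside every block, the remainder is a careful Taylor calculation together with the consistency check that $X=N(\hat F)-M\circ\hat F$ is actually skew-Hermitian. The latter holds because $M\circ\hat F$ is skew-Hermitian ($M$ is real antisymmetric and $\hat F$ Hermitian), and because the numerator $(\hat F^*(\Lambda_\alpha-\alpha_jI)^\dagger\hat F)_{(i,j)}$ defining $N_{(i,j)}$ is Hermitian in $(i,j)$ when $\alpha_i=\alpha_j$ whereas its denominator is antisymmetric, so $N$ is skew-Hermitian as well.
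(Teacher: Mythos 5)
Your proof is correct, but it takes a genuinely different route from the paper's. The paper proceeds by a two-stage change of basis: first conjugating $\Lambda_\alpha+tF$ by the Gram--Schmidt orthonormalization $V(t)$ of $I+tN$, then applying the Schur-complement machinery (a modified version of Theorem~\ref{teigenvalues1} and the procedure from Proposition~\ref{p1}) to the conjugated matrix, whose first-quadrant Schur complement $B(t)$ has off-diagonal entries of size $O(t^3)$; analyticity of $U(t)$ is then read off from the construction and Rellich's theorem. You instead front-load Rellich's theorem to obtain analytic branches, fix the labeling via the strict block-wise decrease of the diagonal of $\hat F$ (which is indeed what makes the first-order degenerate splitting simple, so that $u_j(0)=e_j$ up to phase), and then carry out the classical Rayleigh--Schr\"odinger coefficient-matching on $(\Lambda_\alpha+tF)U(t)=U(t)\Lambda_\xi(t)$. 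The order-$t$ equation pins down the cross-block entries of $X$ and $(D_1)_{(j,j)}$, the order-$t^2$ equation restricted to $\alpha_i=\alpha_j$ gives both $(D_2)_{(j,j)}=-(F^*(\Lambda_\alpha-\alpha_jI)^\dagger F)_{(j,j)}$ and the within-block entries $X_{(i,j)}=N_{(i,j)}$, and the skew-Hermitian/phase argument disposes of the diagonal of $X$. Your identity $(FX)_{(i,j)}=F_{(i,i)}X_{(i,j)}-(F^*(\Lambda_\alpha-\alpha_jI)^\dagger F)_{(i,j)}$ for $\alpha_i=\alpha_j$ checks out, as does the verification that $N-M\circ\hat F$ is skew-Hermitian. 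What your approach buys is transparency and economy for this particular theorem --- the eigenvalue expansion \eqref{r} drops out of the same matching --- but it proves \eqref{r} only under the strict-decrease hypothesis, whereas the paper's route via Theorem~\ref{teigenvalues1} establishes the eigenvalue formula of Theorem~\ref{t4} without that extra assumption; the paper's method is also the one that integrates with the Schur-complement framework used elsewhere in the article. In short: correct, complete modulo the standard Rellich facts you cite, and methodologically distinct from the source.
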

In particular, the Gateaux type derivative $U'(0)=\lim_{t\rightarrow 0}{(U(t)-U(0))}/t$ in the direction $F$ exists and equals $U'(0)=U_{A,F}(N(\hat F)-M\circ\hat F).$ However, this is not a Gateaux derivative in the strict sense since $U(E)$ is not a well defined function near $E=0$, (in case $A$ has non-simple eigenvalues).
\begin{proof}
We simply write $N$ for $N(\hat F)$ and introduce the matrix $V_{ap}(t)=I+tN$. Moreover we let $V(t)$ be the unitary matrix obtained from $V_{ap}(t)$ by performing a Gram-Schmidt orthonormalization. Note that the scalar product of any two columns in $V_{ap}$ belonging to different ``blocks'' (defined by indices $(i,j)$ such that $\alpha_i=\alpha_j$) is 0, and that the scalar product of two columns belonging to the same block is $O(t^2)$, due to the particular structure of $N$. It follows that $V(t)$ is supported only within the diagonal blocks, and moreover that \begin{equation}\label{o0}V(t)=V_{ap}(t)+O(t^2).\end{equation}

We will now prove formula \eqref{r} as well as the affirmations $a)-b)$. For this it clearly suffices to focus on a fixed $j$, and as in previous sections we can assume that $\alpha_j$ equals 0 and is the ``first eigenvalue''. Clearly we may assume that $U_{A,F}=I$. Correspondingly $\hat F=F$ and hence we omit the ``hats'' from the notation in the following. We decompose $A+tF=\Lambda_\alpha+tF$ as
\begin{equation}\label{spec_dec3}\Lambda_\alpha+tF=\left(
                                                                 \begin{array}{cc}
                                                                   0 & 0 \\
                                                                   0 & \Lambda_\tau \\
                                                                 \end{array}
                                                               \right)+t\left(
                                                                 \begin{array}{cc}
                                                                   \Lambda_\varphi & Y \\
                                                                   Y^* & Z \\
                                                                 \end{array}
                                                               \right)\end{equation} where $\varphi_1,\ldots,\varphi_l$ are the diagonal values of $F$ in the first quadrant, which we also denote by $F_{11}$ in accordance with  \eqref{u7}. Recall that $F_{11}$ is assumed to be a diagonal matrix, hence $\Lambda_{\varphi}$ is an $l\times l$-matrix as opposed to $\Lambda_\alpha$ which is $n\times n$. To keep notation separate, we reserve $F_{(1,1)}$ (with a parenthesis and comma) for the element on position $(1,1)$ of $F$.
The first quadrant of the matrix $N=N(F)$ then becomes
$$(N_{11})_{(i,j)}=\left\{\begin{array}{ll}
\frac{(Y\Lambda_\tau^{-1}Y^*)_{(i,j)}}{\varphi_i-\varphi_j},& 1\leq i\neq j\leq l \\
0,& 1\leq i= j\leq l
                      \end{array}\right..
$$  Also note that $N^*=-N$, $N_{12}=N_{21}=0$ and $$\Lambda_\varphi N_{11}- N_{11}\Lambda_\varphi=(Y\Lambda_\tau^{-1}Y^*)-(Y\Lambda_\tau^{-1}Y^*)^d$$ due to its special structure.
We now express $\Lambda_\alpha+tF$ in the new basis provided by $V=V(t)$, clearly $V^*\Lambda_\alpha V=\Lambda_\alpha$ and $V^*F V$ equals
\begin{align*}&\left(\left(
                                                                 \begin{array}{cc}
                                                                   I_l-tN_{11} & 0 \\
                                                                   0 & I_m-tN_{22} \\
                                                                 \end{array}
                                                               \right)+O(t^2)\right)\left(
                                                                 \begin{array}{cc}
                                                                   \Lambda_\varphi & Y \\
                                                                   Y^* & Z \\
                                                                 \end{array}
                                                               \right)\left(\left(
                                                                 \begin{array}{cc}
                                                                   I_l+tN_{11} & 0 \\
                                                                   0 & I_m+tN_{22} \\
                                                                 \end{array}
                                                               \right)+O(t^2)\right)\\&=\left(
                                                                 \begin{array}{cc}
                                                                   \Lambda_\varphi+t((Y\Lambda_\tau^{-1}Y^*)-(Y\Lambda_\tau^{-1}Y^*)^d)+O(t^2) & Y+O(t) \\
                                                                   Y^*+O(t) & Z+O(t) \\
                                                                 \end{array}
                                                               \right)\end{align*}
$\Lambda_\alpha+tF$ is thus unitarily equivalent with \begin{equation}\label{gt6}\left(
                                                                 \begin{array}{cc}
                                                                   0 & 0 \\
                                                                   0 & \Lambda_\tau \\
                                                                 \end{array}
                                                               \right)+\left(
                                                                 \begin{array}{cc}
                                                                   t\Lambda_\varphi+t^2((Y\Lambda_\tau^{-1}Y^*)-(Y\Lambda_\tau^{-1}Y^*)^d)+O(t^3) & tY+O(t^2) \\
                                                                   tY^*+O(t^2) & tZ+O(t^2) \\
                                                                 \end{array}
                                                               \right),\end{equation} and in particular they share eigenvalues.
Now define $E=E(t)=tV(t)^*FV(t)$ and then define $B,C,D$ as in the decomposition \eqref{spec_dec2} (with $\rho=0$). Note that also $B,C$ and $D$ become functions of $t$. Clearly then $C(t)=tY+O(t^2)$, $D(t)=tZ+O(t^2)$ whereas $B(t)$ gets a more complicated expression. In fact,
\begin{align*}&B(t)=\Big(t\Lambda_\varphi+t^2((Y\Lambda_\tau^{-1}Y^*)-(Y\Lambda_\tau^{-1}Y^*)^d)+O(t^3)\Big)-C(t)(\Lambda_\tau+D(t))^{-1}C^*(t)=\\
&\Big(t\Lambda_\varphi+t^2((Y\Lambda_\tau^{-1}Y^*)-(Y\Lambda_\tau^{-1}Y^*)^d)+O(t^3)\Big)-\Big(t^2Y\Lambda_\tau^{-1}Y^*+O(t^3)\Big)=\\&t\Lambda_\varphi-t^2(Y\Lambda_\tau^{-1}Y^*)^d+O(t^3).\end{align*}
Note in particular that in the original notation, the values of $B(t)$ on the diagonal are $t F_{(j,j)}-t^2(F^*(\Lambda_\alpha-\alpha_j)^{\dagger}F)_{(j,j)}+O(t^3)$, whereas all off-diagonal entries are $O(t^3)$. To prove the identity \eqref{r}, we now have to return to the proof of Theorem \ref{teigenvalues1}. If we replace any occurrence of $\|B\|,\|C\|,\|D\|,\|E\|$ by $t$, it is easy to check that the majority of the proof goes through with minimal changes and that \eqref{r} follows by \eqref{est22}. An exception is the paragraph following \eqref{pl}, which needs to be modified due to the off-diagonal elements in $B(t)$. However, since these are $O(t^3)$, the argument is easily updated to conclude that \eqref{pl1} holds with $\sup_{\zeta\in\Omega_3}|\psi(\zeta)|<c_4(t^3)^{k}$ for some constant $c_4$, and from there on the proof is easy. We omit the details.

The argument does not show that the eigenvalues are real analytic, but this follows from F. Rellich's theorem mentioned earlier. We need to verify that the ordering of the eigenvalues in this theorem coincides with the ordering used in Rellich's theorem, but this is easy since subsets of $\R^2$ of the form $$\{(t,\alpha_j+t\hat F_{(j,j)}+(\hat F^*(\Lambda_\alpha-\alpha_j)^\dagger \hat F)_{(j,j)}t^2+\epsilon: |t|<\delta, |\epsilon|\leq ct^3\}$$ only intersect at (0,0), given that $\delta$ is small enough. Here we again use that $\hat F$ has block-wise distinct diagonal elements.

We now consider the statements concerning the eigenvectors. If we apply the procedure described in the proof of Proposition \ref{p1} to the matrix \eqref{gt6}, we get in place of \eqref{matrix2} a matrix with the similar structure \begin{align*}
\left(
\begin{array}{cc}
t\Lambda_\varphi-t^2(Y\Lambda_\tau^{-1}Y^*)^d-\xi_q I_l+O(t^3) & O(t) \\
O(t^2) & \Lambda_\tau+O(t) \\
\end{array}
\right)\end{align*}
where $\xi_q=t\varphi_q-t^2(Y\Lambda_\tau^{-1}Y^*)_{q,q}+O(t^3)$ by the first part of the proof. With suitable modification of the proof of Proposition \ref{p1}, it follows that a matrix $U(t)$ containing a set of normalized eigenvectors of the form $I-M\circ E+O(t^2)$ exists. By \eqref{gt6} we see that $E=tF+O(t^2)$ so the normalized eigenvectors have the form $I-t M\circ F+O(t^2)$. Combining this with \eqref{o0} and the definition of $V_{ap}$, we see that the original matrix $\Lambda_\alpha+tF$ has normalized eigenvectors of the form $$(I+tN+O(t^2))(I-t M\circ F+O(t^2))=I+tN-t M\circ F+O(t^2),$$ as was to be shown. By inspection of the construction it also follows that $U(t)$ is real analytic, (given real analyticity of $\xi$).
\end{proof}

\bibliographystyle{plain}
\bibliography{MCref}

\end{document}